\newcommand{\ignore}[1]{}
\newcommand{\bra}[1]{{\left\langle{#1}\right\vert}}
\newcommand{\ket}[1]{{\left\vert{#1}\right\rangle}}
\newcommand{\qw}[1][-1]{\ar @{-} [0,#1]}
\newcommand{\qwx}[1][-1]{\ar @{-} [#1,0]}
\newcommand{\cw}[1][-1]{\ar @{=} [0,#1]}
\newcommand{\gate}[1]{*+<.6em>{#1} \POS ="i","i"+UR;"i"+UL **\dir{-};"i"+DL **\dir{-};"i"+DR **\dir{-};"i"+UR **\dir{-},"i" \qw}
\newcommand{\meter}{*=<1.8em,1.4em>{\xy ="j","j"-<.778em,.322em>;{"j"+<.778em,-.322em> \ellipse ur,_{}},"j"-<0em,.4em>;p+<.5em,.9em> **\dir{-},"j"+<2.2em,2.2em>*{},"j"-<2.2em,2.2em>*{} \endxy} \POS ="i","i"+UR;"i"+UL **\dir{-};"i"+DL **\dir{-};"i"+DR **\dir{-};"i"+UR **\dir{-},"i" \qw}
\newcommand{\control}{*!<0em,.025em>-=-<.2em>{\bullet}}
\newcommand{\controlo}{*+<.01em>{\xy -<.095em>*\xycircle<.19em>{} \endxy}}
\newcommand{\ctrl}[1]{\control \qwx[#1] \qw}
\newcommand{\ctrlo}[1]{\controlo \qwx[#1] \qw}
\newcommand{\multigate}[2]{*+<1em,.9em>{\hphantom{#2}} \POS [0,0]="i",[0,0].[#1,0]="e",!C *{#2},"e"+UR;"e"+UL **\dir{-};"e"+DL **\dir{-};"e"+DR **\dir{-};"e"+UR **\dir{-},"i" \qw}
\newcommand{\ghost}[1]{*+<1em,.9em>{\hphantom{#1}} \qw}
\newcommand{\gategroup}[6]{\POS"#1,#2"."#3,#2"."#1,#4"."#3,#4"!C*+<#5>\frm{#6}}
\newcommand{\rstick}[1]{*!L!<-.5em,0em>=<0em>{#1}}
\newcommand{\lstick}[1]{*!R!<.5em,0em>=<0em>{#1}}
\newcommand{\dstick}[1]{*!U!<0em,.5em>=<0em>{#1}}
\newcommand{\Qcircuit}{\xymatrix @*=<0em>}
\newcommand{\BDD}{\text{BDD}}
\newcommand{\dist}{\mathrm{dist}}
\newcommand{\Filter}{\mathrm{Filter}}
\newtheorem{app-lemmaenv}[section]{Lemma}
\newcommand{\bmark}[1]{{\color{blue} #1}}
\newcommand{\bB}{\textbf{B}}
\newcommand{\cO}{\mathcal{O}}
\newcommand{\cG}{\mathcal{G}}
\newcommand{\Tof}{\textrm{Toffoli}}
\newcommand{\enum}{\textsf{Enum}}
\newcommand{\enump}{\textsf{EnumP}}
\newcommand{\qenump}{\textsf{QEnumP}}
\newcommand{\qsvp}{\textsf{QSVP}}
\newcommand{\poly}{\textrm{poly}}
\newcommand{\bdd}{\textsf{BDD}}
\newcommand{\bddp}{\textsf{BDDP}}
\newcommand{\bdda}{\textsf{BDD}_{\alpha}}
\newcommand{\BDDP}{\text{BDDP}}
\begin{document}
\setlength{\textheight}{8.0truein}    
\runninghead{Space-efficient Classical and Quantum Algorithms for the Shortest Vector Problem}
            {Y. Chen, K.-M.Chung, and C.-Y. Lai}

\normalsize\textlineskip
\thispagestyle{empty}
\setcounter{page}{1}


\vspace*{0.88truein}

\alphfootnote

\fpage{1}
\centerline{\bf
SPACE-EFFICIENT CLASSICAL AND QUANTUM}
\vspace*{0.035truein}
\centerline{\bf ALGORITHMS FOR THE SHORTEST VECTOR PROBLEM }
\vspace*{0.37truein}
\centerline{\footnotesize
YANLIN CHEN}
\vspace*{0.015truein}
\centerline{\footnotesize\it Institute of Information Science, Academia Sinica}
\baselineskip=10pt
\centerline{\footnotesize\it  Taipei, Taiwan.}
\vspace*{10pt}
\centerline{\footnotesize 
KAI-MIN CHUNG}
\vspace*{0.015truein}
\centerline{\footnotesize\it Institute of Information Science, Academia Sinica }
\baselineskip=10pt
\centerline{\footnotesize\it Taipei, Taiwan.}
\vspace*{10pt}
\centerline{\footnotesize 
CHING-YI LAI}
\vspace*{0.015truein}
\centerline{\footnotesize\it Institute of Information Science, Academia Sinica }
\baselineskip=10pt
\centerline{\footnotesize\it  Taipei, Taiwan.}
\vspace*{0.225truein}

\vspace*{0.21truein}
\abstracts{
A lattice is the integer span of some linearly independent vectors. Lattice problems have many significant applications in coding theory and cryptographic systems for their conjectured hardness. The Shortest Vector Problem (SVP), which asks to find a shortest nonzero vector in a lattice, is one of the well-known problems that are believed to be hard to solve, even with a quantum computer.\\
 In this paper we propose  space-efficient  classical  and  quantum algorithms for solving SVP. Currently the  best time-efficient algorithm for solving SVP takes $2^{n+o(n)}$ time and $2^{n+o(n)}$ space. Our classical algorithm takes $2^{2.05n+o(n)}$ time to solve SVP and it requires only $2^{0.5n+o(n)}$ space. We then adapt our classical algorithm to a quantum version, which can solve SVP in time $2^{1.2553n+o(n)}$  with $2^{0.5n+o(n)}$ classical space and only poly$(n)$ qubits.}{}{}

\vspace*{10pt}
\keywords{shortest vector problem, bounded distance decoding, quantum computation, Grover search}
\vspace*{3pt}

\vspace*{1pt}\textlineskip	
\section{Introduction}	        
\vspace*{-0.5pt}
\noindent
\label{Introduction}
Quantum attackers refer to those who own the power of quantum computation and are malicious to a cryptographic system. Since they are capable of breaking  some existing cryptographic systems~\cite{shor1999polynomial}, computer scientists aim to find cryptographic systems that are secure against the threat of quantum attackers.
Such a cryptographic system is referred to post-quantum cryptography 
and one well-known example is \emph{lattice}-based cryptography.
%

A lattice  is the set of integer combinations of some linearly independent (basis) vectors
and it is related to many topics in pure mathematics and applied mathematics, especially in number theory, group theory, convex optimization, cryptography and coding theory~\cite{minkowski1891ueber}.
As for cryptographic systems, it was first shown in Ajtai's seminal report~\cite{ajtai1996generating} that one can  build cryptographic primitives with lattices.
In~\cite{micciancio2007worst}, Regev and Micciancio showed that finding small integer solutions to certain random modular linear equations
is at least as hard as solving certain hardest lattice problems.
Later Regev proved a reduction from certain hardest lattice problems to learning with errors problem with quantum computation power
and he proposed a classical public-key cryptosystem whose security is based on the hardness of the learning problem~\cite{Regev:2005:LLE:1060590.1060603}.
In addition, Regev and Micciancio introduced  efficient cryptographic systems based on the conjectured intractability of solving lattice problems~\cite{micciancio2009lattice}.
Furthermore, Gentry built a fully homomorphism encryption system based on the hardness of a lattice problem~\cite{gentry2009fully}.
There are  also many other cryptographic systems     related to lattice problems.

 Since the security of a lattice-based cryptography depends on the hardness of solving a lattice problem, it is important to determine the existence of an efficient algorithm for  this problem.  One of the hard lattice problems  is the \emph{shortest vector problem} (SVP):  given a lattice $L\subset \mathbb{R}^n$ (of full rank), the goal is to output a shortest nonzero vector in $L$ and its length is denoted by $\lambda_1(L)$.
(In the following $n$ is referred to the rank of the lattice.)
{ The approximation version of SVP (called \emph{$\gamma$-SVP}) asks to find a nonzero vector in $L$ with length at most $\gamma\cdot \lambda_1(L)$, where $\gamma$ is a function of  $n$.
In practical, if there exists an efficient algorithm that solves the  $\gamma$-approximation SVP with $\gamma=$poly($n$), those cryptographic primitives based on the hardness of SVP would be insecure.}
 Hence we would like to know how fast we can solve SVP by a classical or quantum algorithm.

{To solve $\gamma$-SVP with an almost constant approximation factor  was shown to be NP-hard (under randomized reductions)~\cite{Dinur:2000:ASW:648257.752906, Ajtai:1998:SVP:276698.276705, Khot:2005:HAS:1089023.1089027, Haviv:2007:THS:1250790.1250859, Micciancio:2001:SVL:586841.586903}.
 However,  when the approximation factor is above $n^2$, it is probably easier than all NP-complete problems~\cite{aharonov2005lattice}.}
 When the approximation factor  is  exponential in $n$, Lenstra \emph{et al.} proposed a method (the celebrated Lenstra-Lenstra-Lov\'{a}sz (LLL) algorithm) to construct a polynomial solver to this $\gamma$-SVP problem~\cite{lenstra1982factoring}.
 Soon after, Schnorr gave an 
  algorithm to solve $\gamma$-SVP (with approximation factor $r^{n/r}$) in $2^{O(r)}\text{poly}(n)$ time for any $r\geq 2$~\cite{schnorr1994lattice}. Then Buchmann and Ludwig improved Schnorr's algorithm with better parameters~\cite{buchmann2006practical}.

On the other hand, Ajtai, Kumar, and Sivakumar gave a randomized algorithm with both time and space complexity $2^{O(n)}$ (the so-called AKS sieving algorithm).
 { This time complexity was determined by Regev to be $2^{16n+o(n)}$~\cite{Regev:2004lecture} and then improved to $2^{5.9n+o(n)}$ by Nguyen and Vidick~\cite{nguyen2008sieve}. The time complexity of AKS sieve was further improved to $2^{2.465n+o(n)}$  by Pujol and Stehle~\cite{cryptoeprint:2009:605}. }
Later  Micciancio and Voulgaris\cite{micciancio2013deterministic} gave a deterministic classical algorithm that solves exact SVP and some other lattice problems in the  worse case in  time $2^{2n+(n)}$ and  space $2^{n+o(n)}$.
 Recently Aggarwal \emph{et al.} provided a classical probabilistic algorithm that solves exact SVP  with probability $1-e^{-\Omega(n)}$
  and it takes time $2^{n+o(n)}$ and space $2^{n+o(n)}$~\cite{aggarwal2015solving}.

 In fact we can solve SVP much faster under various heuristic assumptions.  Nguyen and Vidick gave a heuristic variant of the AKS sieving  algorithm that requires time $2^{0.415n+o(n)}$   and space $2^{0.208n+o(n)}$~\cite{nguyen2008sieve}. Later Micciancio and Voulgaris proposed a different  type of heuristic sieve algorithm, called GaussSieve,  and it performs very well in practice~\cite{Micciancio:2010:FET:1873601.1873720}.  { Recently, Laarhoven \emph{et al.} developed various classical and quantum sieve algorithms with time-space tradeoff~\cite{laarhoven2015finding}~\cite{laarhoven2015search}~\cite{bai2016tuple}~\cite{Becker:2016:NDN:2884435.2884437}~\cite{HKL17tradeoff}.}

It has been shown that quantum algorithms can have speedup over classical ones (e.g. Shor's factoring algorithm~\cite{shor1999polynomial} and Grover's search algorithm \cite{Grover1996rk}).
 Nevertheless, it seems that quantum power may not be much helpful in time for solving SVP, since most of the classical algorithms for solving lattice problems are recursive ones, which are not likely to
 have speedup from quantum parallel computation.
To date the most efficient quantum algorithm for solving SVP is provided by Laarhoven \emph{et al.} and it takes time $2^{1.799n+o(n)}$.
In addition, they also proposed a heuristic quantum algorithm that can solve SVP in $2^{0.265n+o(n)}$~\cite{laarhoven2015finding}).
In this paper, we show that quantum power does help us to solve lattice problem space-efficiently.
We will provide  a quantum algorithm that solves SVP in time $2^{1.2553n+o(n)}$ and it requires  classical space $2^{0.5n+o(n)}$ and only polynomially many qubits.
Along the way,  we also introduce a classical algorithm for solving SVP in time $2^{2.05n+o(n)}$ and space $2^{0.5n+o(n)}$.
Both algorithms are mainly built on a lattice  \emph{enumeration} algorithm $\enum$.

Lattice enumeration is a standard technique to solve SVP by systematically enumerating all lattice points in a bounded region of space.
Many enumeration algorithm can solve SVP with  only polynomial space. However, they usually run in $n^{O(n)}$ time~\cite{Micciancio:2015:FLP:2722129.2722150, Kannan:1987:MCB:35577.35580, hanrot2007improved} and  are not comparable to other algorithms that can solve SVP in single exponential time. In contrast, Kirchner and Fouque proposed a lattice enumeration algorithm with tradeoff between  time and space~\cite{kirchner2016time}, which was then used to construct an algorithm that solves SVP and both of its time and space complexity are $3^{n+o(n)}$.
 (In the extreme case, it can solve SVP in time $O(n^{n/4})$ with only polynomial space.)

Inspired by the enumeration method in~\cite{kirchner2016time}, we construct a space-efficient enumeration algorithm $\enum$ by using a \emph{BDD oracle}.
Roughly speaking,  for a target vector close enough to a lattice $L$, the search problem $\alpha$-BDD (Bounded Distance Decoding Problem)
is to find  a lattice point $y\in L$ such that $\|y-t\| \leq \alpha \cdot \lambda_1(L)$ for $\alpha<0.5$.
A \emph{BDD oracle} is an oracle that solves BDD. 
Our classical algorithm $\enum$ makes a list of the lattice points within a specified distance (called \emph{enumeration radius}) to a target vector
by using a BDD oracle:
\begin{theorem}\label{Enum}(informal)
Given a lattice $L \subset \mathbb{R}^n$, a target vector $t\in\mathbb{R}^n$, an $\alpha$-$\BDD$ oracle $\bdda$ with $\alpha < 0.5$, and an integer scalar $p$ such that $p\alpha>1$,  there exists a classical algorithm   that collects all lattice points within distance $p\alpha \lambda_1(L)$ to $t$ by querying  $\bdda$ oracle $p^n$ times.
\end{theorem}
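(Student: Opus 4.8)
The plan is to reduce the enumeration of a large ball to a batch of unique-decoding instances on a coarser lattice. The key observation is that scaling the lattice up by the integer factor $p$ turns the requested enumeration radius into exactly the unique-decoding radius of the oracle. Concretely, consider the sublattice $pL = \{pv : v \in L\}$. It has minimum distance $\lambda_1(pL) = p\,\lambda_1(L)$ and index $[L : pL] = p^n$, so $L$ splits into the $p^n$ cosets $c_i + pL$, where the representatives $c_i = \sum_{j} a_j b_j$ (with $a_j \in \{0,\dots,p-1\}$) range over the integer combinations of a basis $b_1,\dots,b_n$ of $L$ with coefficients in $\{0,\dots,p-1\}$. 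The target enumeration radius then satisfies $R := p\alpha\,\lambda_1(L) = \alpha\,\lambda_1(pL)$, which is precisely the guaranteed unique-decoding radius of an $\alpha$-$\BDD$ solver run on $pL$.

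First I would enumerate the $p^n$ representatives $c_i$. For each one I decode the shifted target $t - c_i$ against the lattice $pL$; since decoding in $pL$ with target $t-c_i$ is the same as decoding in $L$ with target $(t-c_i)/p$ and then rescaling the output by $p$, this is a single call to the given oracle $\bdda$. I would collect the returned point $y_i \in c_i + pL$ and keep it only after verifying $\|y_i - t\| \le R$; this check filters out spurious answers returned when the shifted target happens to lie outside the decoding radius. This uses exactly $p^n$ oracle queries plus polynomial bookkeeping per query, matching the claimed count.

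For correctness I would argue completeness and uniqueness. Because $\alpha < 0.5$, we have $2R = 2\alpha\,\lambda_1(pL) < \lambda_1(pL)$, so any ball of radius $R$ meets each coset $c_i + pL$ in at most one point: two distinct points of a common coset differ by a nonzero vector of $pL$, hence are at distance at least $\lambda_1(pL) > 2R$. Thus every $y \in L$ with $\|y-t\| \le R$ is the unique near point of its own coset, and since $t - c_i$ then lies within $R = \alpha\,\lambda_1(pL)$ of $pL$, the oracle is guaranteed to return exactly $y - c_i$, so $y$ is recovered; conversely the distance check ensures nothing farther than $R$ is reported. As the $p^n$ cosets partition $L$, this collects precisely the set of lattice points within distance $R$ of $t$. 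The hypothesis $p\alpha > 1$ plays no role in this counting argument; it guarantees $R > \lambda_1(L)$, so that the enumerated ball is large enough to contain the shortest nonzero vectors (the case $t=0$), which is what makes the routine useful for SVP.

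The main obstacle I anticipate is not the combinatorics but the interface with the oracle at the boundary: an $\alpha$-$\BDD$ oracle is only promised to behave well when its target lies within $\alpha\,\lambda_1$ of the lattice, so for cosets whose nearest point falls outside radius $R$ the returned vector is unconstrained, and the argument must lean on the explicit distance verification to discard these. A secondary point to handle carefully is that running the oracle on $pL$ (equivalently, on the rescaled target in $L$) preserves the approximation factor $\alpha$ exactly, which is what makes the two radii align; I would state this scaling invariance explicitly before invoking it.
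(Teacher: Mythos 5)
Your proposal is correct and follows essentially the same route as the paper: decompose $L$ into the $p^n$ cosets of $pL$, observe that the enumeration radius $p\alpha\lambda_1(L)$ equals the decoding radius $\alpha\lambda_1(pL)$ of the scaled lattice, and use the shift- and scale-invariance of the $\alpha$-$\BDD$ oracle to reduce each coset to one query on $L$ (this is exactly the content of the paper's Lemma~\ref{lemma:y_equiv}). Your explicit distance-verification filter is a harmless addition the paper omits, since its theorem statement tolerates extra points in the output list.
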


$\enum$ provides a tradeoff between the enumeration radius and time complexity. One can solve SVP by $\lceil\frac{1}{\alpha}\rceil^n$ queries to the $\bdda$ oracle.
Note that the $\bdda$ oracle does not work for $\alpha\geq 0.5$, and therefore  $p$ has to be at least $3$. 
In addition, a central component of an $\alpha$-BDD algorithm is the preparation of a discrete Gaussian distribution.
Observe that the same discrete Gaussian samples can be used whenever a BDD oracle is queried.
Consequently,  these discrete Gaussian samples can be prepared in advance and hence save the overall time complexity.
Moreover,  Aggarwal \emph{et al.} have discussed how to prepare these discrete Gaussian samples~\cite{aggarwal2015solving}
and we will adopt their method.
With this preprocessing, the resulting BDD problem is called a BDDP problem.
 (For the formal definition of BDDP, please refer to definition~\ref{BDDP}.)
It remains to  build a $\bddp$ oracle. Then we have the following theorem:
  \begin{theorem}\label{C_SVPsolver}
  (informal)
There exists a classical probabilistic algorithm that solves SVP with probability $1-2^{-\Omega(n)}$ in time $2^{2.0478n+o(n)}$ and in space $2^{n/2+o(n)}$.
\end{theorem}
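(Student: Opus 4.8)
The plan is to combine the enumeration routine $\enum$ of Theorem~\ref{Enum} with a concrete $\bddp$ oracle and to charge both costs against a single budget of $2^{n/2+o(n)}$ classical space. To solve exact SVP I would run $\enum$ with target $\mathbf{t}=0$: since it collects every lattice point within distance $p\alpha\lambda_1(L)$ of the target and we require $p\alpha>1$, the ball of that radius around the origin contains all shortest nonzero vectors, so it suffices to return the shortest nonzero point in the returned list. Because $\alpha<0.5$ forces $p\ge 3$, the cheapest admissible choice is $p=3$ with $\alpha$ just above $1/3$, giving $p^n=3^n=2^{1.585n}$ oracle queries. The value $\lambda_1(L)$ is not known in advance, but an LLL-reduced basis pins it down to within a $2^{O(n)}$ factor; guessing it to multiplicative precision $1+1/n$ costs only $\poly(n)$ repetitions, so I would treat $\lambda_1(L)$ as known and absorb this factor into the $o(n)$ term.

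The heart of the argument is the $\bddp$ oracle itself, which I would build by gradient ascent on the periodic Gaussian sum $\f=\sum_{y\in L}\rho_s(y-\mathbf{t})$. Near any lattice point $\f$ has a strict local maximum, and for a target within distance $\alpha\lambda_1(L)$ with $\alpha<1/2$ the nearest lattice point is the unique maximizer of its surrounding basin, so following $\gf$ uphill from $\mathbf{t}$ recovers it. Neither $\f$ nor $\gf$ can be evaluated exactly, so (via Poisson summation) I would express them as expectations over a discrete Gaussian on the dual lattice and replace them by the empirical estimators $\fw$ and $\gfw$ formed from a precomputed sample set $W$ (the subscript $W$ recording this finite-sample approximation). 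The key is to prove that, with high probability, $\fw$ and $\gfw$ are uniformly close to $\f$ and $\gf$ over the basin, so that ascent on $\gfw$ still converges to the correct point.

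For the samples themselves I would invoke the discrete Gaussian sampler of Aggarwal \emph{et al.}~\cite{aggarwal2015solving}, applied to the dual lattice, which produces the required samples within the $2^{n/2+o(n)}$ space bound; crucially the \emph{same} list $W$ serves every one of the $3^n$ queries, so its generation is paid once as preprocessing rather than per query. Correctness of a single query can be driven to failure probability below $3^{-n}=2^{-1.585n}$ by taking $|W|$ large enough, after which a union bound over all $3^n$ queries yields overall success $1-2^{-\Omega(n)}$. The total running time is the preprocessing cost plus $3^n$ times the per-query ascent cost; optimizing $\alpha\in(1/3,1/2)$, the width $s$, and the sample count $|W|$ against the fixed space $2^{n/2}$ yields the claimed time $2^{2.0478n+o(n)}$.

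The main obstacle I anticipate lies in the second and third paragraphs jointly. The quantity that distinguishes the correct basin — the contrast between $\f$ at the nearest lattice point and at a generic target, together with the size of $\gf$ steering the ascent — is exponentially small in $n$, whereas each summand of the empirical estimator has constant scale. Resolving a $2^{-cn}$ signal reliably therefore forces $|W|$ to grow like $2^{2cn}$, and it is the tension between making $c$ small (feasible only because $\alpha>1/3$ permits a favourable width $s$) and keeping $|W|$ within the $2^{n/2}$ space budget that fixes the exponent. Establishing sharp enough concentration for $\fw$ and $\gfw$ to guarantee convergence with failure $2^{-\Omega(n)}$ under this space constraint is the technical crux; the enumeration, the $\lambda_1(L)$ guessing, and the union bound are routine once this accuracy-versus-space trade-off is pinned down, and it is precisely this trade-off that produces the $2.0478$ in the exponent.
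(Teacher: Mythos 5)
Your architecture is the paper's: run $\enum$ with $p=3$ and target $0$, answer each of the $3^n$ queries with a BDD-with-preprocessing oracle built by gradient ascent on the periodic Gaussian $\fw$, and pay for the dual-lattice discrete Gaussian samples once, using the sampler of Aggarwal \emph{et al.} within $2^{n/2+o(n)}$ space. Two of the steps you flag as work are black boxes in the paper: the uniform closeness of $\gfw/\fw$ to the true gradient over the whole decoding ball holds simultaneously for all targets with probability $1-2^{-\Omega(n)}$ over the choice of $W$ (Theorem~\ref{gradientascent}, from Dadush et al.), so no per-query union bound over the $3^n$ targets is needed; and the quantitative reduction from $\alpha$-BDD to DGS with $m=O(n\log(1/\epsilon)/\sqrt{\epsilon})$ samples is Theorem~\ref{CVPtoDGS}. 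Also, the classical algorithm never needs to guess $\lambda_1(L)$: the enumeration guarantee is stated relative to $\lambda_1(L)$ without knowing it, and one simply returns the shortest nonzero vector in the list.

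The genuine gap is in your final paragraph: the mechanism you propose for deriving the exponent --- optimizing $\alpha$, $s$, and $|W|$ against the $2^{n/2}$ \emph{space} budget --- is not what fixes $2.0478$, and it would not produce that number. If space were the binding constraint you could afford $|W|$ up to $2^{n/2}$, and a $1/3$-BDDP oracle needs only $m=2^{0.1604n+o(n)}$ samples (take $\epsilon=2^{-0.3208n}$ in Theorem~\ref{CVPtoDGS}), giving total time about $2^{1.76n}$, which is better than the claim. The actual obstruction is that the only provable sampler available (Theorem~\ref{hDGS}) produces $D_{L^*,s}$ only for widths $s\geq\sqrt{2}\,\eta_{1/2}(L^*)$; by Lemma~\ref{smoothinglambda} the requirement $\eta_\epsilon(L^*)\geq\sqrt{2}\,\eta_{1/2}(L^*)$ forces $\epsilon= e^{-(\beta^2/e+o(1))n}$ with $\beta=2^{0.401}$ (Corollary~\ref{smoothingDGS}), hence $\alpha=0.391$ rather than ``just above $1/3$'' and a per-query cost of $m=O(n/\sqrt{\epsilon})=2^{0.4628n+o(n)}$ (Corollary~\ref{cor:BDDP}). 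The exponent is then $\log_2 3+0.4628=2.0478$, and the $2^{n/2}$ space is the sampler's own footprint, not a storage bound on $|W|$. Without identifying this width floor, your optimization is underdetermined and the stated constant cannot be recovered.
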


Next we use the enumeration algorithm $\enum$ as a quantum subroutine, and then apply the idea of quantum Grover search to amplify the probability of finding the correct answer.
Since $\enum$  makes a finite list of candidates for SVP, we can use Grover search to find one with minimum length in the list and have potential quantum speedup.
 \begin{theorem}[Main Theorem]
There is a quantum algorithm that solves SVP with probability $1-e^{\Omega(n)}$ that requires  $2^{1.2553n+o(n)}$ elementary quantum gates,  $2^{0.5n+o(n)}$  classical space, and only poly$(n)$  qubits.
\end{theorem}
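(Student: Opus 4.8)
The plan is to run the enumeration algorithm $\enum$ of Theorem~\ref{Enum} coherently inside a quantum minimum-finding routine. Fix $p=3$, the smallest integer admitting some $\alpha<1/2$ with $p\alpha>1$, and choose $\alpha\in(1/3,1/2)$, so that the enumeration radius $3\alpha\,\lambda_1(L)>\lambda_1(L)$ and the candidate list is guaranteed to contain a shortest nonzero vector. By the structure of $\enum$, each of the $3^n$ BDD queries is naturally indexed by a string $i\in\{0,1,2\}^n$ that determines a shifted target $t_i$, and the BDDP oracle maps $t_i$ to a candidate lattice vector $v_i$. Solving SVP is then exactly the problem of finding the index $i$ minimizing $\|v_i\|$ over all $i$ with $v_i\neq 0$, and Grover-type search will give a quadratic speedup over the $3^n$ classical queries.

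First I would prepare the discrete Gaussian samples classically, following Aggarwal \emph{et al.}, storing them in $2^{0.5n+o(n)}$ classical memory; this is done once as preprocessing and fixes a deterministic decoding function. The crucial point is that these samples are classical constants, so the BDDP oracle can be realized as a \emph{fixed} reversible circuit that streams through the stored samples (using them as classical control) and accumulates the decoded vector $v_i$ in a $\poly(n)$-qubit work register. This is what keeps the qubit count polynomial and avoids any need for QRAM: no sample is ever held in superposition, and each invocation reuses the same classical data. I would then wrap this reversible oracle so that, on input $\ket{i}$, it produces $\ket{i}\ket{v_i}\ket{\|v_i\|}$ and cleanly uncomputes every intermediate register.

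Next I would apply the D\"urr--H\o yer quantum minimum-finding algorithm over $\{0,1,2\}^n$ with key $\|v_i\|$, setting the key to $+\infty$ whenever $v_i=0$ so that the trivial output is never selected. One run returns, with probability at least a constant, the index achieving the minimum length using $O(\sqrt{3^n})=O(3^{n/2})$ oracle evaluations; repeating $O(n)$ times boosts the success probability to $1-e^{-\Omega(n)}$. Each oracle evaluation costs $2^{\beta n+o(n)}$ elementary gates, where $2^{\beta n}$ is the cost of one BDDP decoding; with the optimized $\alpha$ underlying Theorem~\ref{C_SVPsolver} one has $\beta\approx 0.4628$. Since $3^{n/2}=2^{(\log_2 3)n/2}=2^{0.7925n}$, the total gate count is $3^{n/2}\cdot 2^{\beta n}\cdot\poly(n)=2^{(0.7925+\beta)n+o(n)}=2^{1.2553n+o(n)}$; the classical space is dominated by the Gaussian samples at $2^{0.5n+o(n)}$, and the quantum memory is the $O(n)$-qubit index register together with the $\poly(n)$ work registers of the decoder. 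This also explains the relation to the classical bound, where the same oracle is instead queried $3^n$ times, giving $2^{(1.585+\beta)n}=2^{2.0478n}$.

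The main obstacle is the coherent, poly-qubit implementation of the BDDP oracle together with its clean uncomputation: the decoding must be rewritten as a reversible streaming computation over the exponentially many classical samples that leaves no garbage entangled with the index register, since any residual entanglement would destroy the Grover interference. A secondary technical point is correctly handling the exclusion of the zero vector (and of any index whose decoding falls outside the enumeration radius) inside the minimum-finding key, so that the routine provably returns a shortest \emph{nonzero} vector; this must be combined with the correctness guarantee of $\enum$ from Theorem~\ref{Enum} and the high-probability success of the discrete Gaussian preprocessing to yield the overall $1-e^{-\Omega(n)}$ bound.
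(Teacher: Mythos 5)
Your proposal is correct and follows essentially the same route as the paper: quantum minimum finding (D\"urr--H\o yer, realized via repeated Grover searches with an adaptively lowered length threshold) over the $3^n$ enumeration indices, with the $0.391$-BDDP oracle implemented as a reversible circuit that uses the classically preprocessed discrete Gaussian samples as fixed control data, yielding the same $3^{n/2}\cdot 2^{0.4628n}=2^{1.2553n+o(n)}$ gate count, $2^{0.5n+o(n)}$ classical space, and poly$(n)$ qubits. The paper packages the threshold comparison as an explicit $\Filter_d$ circuit sandwiched between two applications of $U_{BDDP(L)}$ (which also handles the uncomputation you flag as the main obstacle), but this is the same algorithm at the level of detail you give.
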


{ Compared to other single exponential time quantum algorithms, our quantum algorithm $\qsvp$ for solving SVP needs only polynomially many qubits and it does not need quantum RAM model for accessing classical memories. In addition, $\qsvp$ uses exactly $2^{0.5n+o(n)}$ classical space and only poly($n$)  qubits. We remark that $\qsvp$ is the first single exponential time quantum algorithm that solves SVP with only polynomially many qubits.}
In Table~\ref{tb:SVP_result}, we list some heuristic or provable algorithms for solving SVP in the past two decades.  One can  find that both our classical algorithm $\enump$ and quantum algorithm $\qsvp$ use the least space compared to other  provable algorithms with a single exponential time complexity.

\begin{table}[h]
    \centering
    \begin{tabular}{| l | l | l | l | l |}
    \hline
    Algorithm & {$\text{Approximation}\atop \text{factor}$} & Time complexity  & Space complexity &  Type\\ \hline
    Sch87~\cite{schnorr1994lattice}    &  $r^{n/r}$  & $2^{O(r)}\cdot \text{poly}(n)$ & $\text{poly}(n)$  & classical, heuristic \\ \hline
    Lud03~\cite{ludwig2003faster}    & \tiny$(0.167r)^{0.5n/r} $   & \tiny	$O((0.167r)^{0.125r})\cdot \text{poly}(n)$ & $\normalsize\text{poly}(n)$  & quantum, heuristic \\ \hline
    NV08~\cite{nguyen2008sieve}  & $1 $ & $2^{0.415n+o(n)}$   & $2^{0.208n+o(n)}$ & classical, heuristic \\ \hline
    LMP15~\cite{laarhoven2015finding}    & 1           & $2^{0.312n+o(n)}$      &$2^{0.208n+o(n)}$, QRAM     & quantum, heuristic   \\ \hline
    LMP15~\cite{laarhoven2015finding}    & 1           & $2^{0.268n+o(n)} $       &$2^{0.268n+o(n)}$, QRAM     & quantum, heuristic   \\ \hline
        Laa15~\cite{laarhoven2015search}    & 1           & $2^{0.265n+o(n)}$        &$2^{0.265n+o(n)}$, QRAM     & quantum, heuristic   \\ \hline
    BLS16~\cite{bai2016tuple}  & 1           & $2^{0.4812n+o(n)}$       &$2^{0.1887n+o(n)}$         & classical, heuristic \\ \hline
    BDGL16~\cite{Becker:2016:NDN:2884435.2884437}  & 1           & $2^{0.2925n+o(n)}$       &$2^{0.208n+o(n)}$        & classical, heuristic \\ \hline
    LLL82\cite{lenstra1982factoring}    & $2^{0.5n}$   & $\text{poly}(n)$    & $\text{poly}(n)$           & classical, provable \\ \hline
    Kan83~\cite{Kannan:1983:IAI:800061.808749}    & $1 $   & $n^{n+o(n)}$ & $\text{poly}(n)$  & classical, provable \\ \hline
    Hel85~\cite{Helfrich:1985:ACM:6566.6567}    & $1 $   & $n^{0.5n+o(n)}$ & $\text{poly}(n)$  & classical, provable \\ \hline
    AKS01~\cite{Ajtai:2001:SAS:380752.380857}    & $1 $   & $2^{O(n)}$ & $2^{O(n)}$  & classical, provable \\ \hline
    Reg04~\cite{Regev:2004lecture}    & $1 $   & $2^{16n+o(n)}$ & $2^{8n+o(n)}$  & classical, provable \\ \hline
    HS07~\cite{DBLP:journals/corr/abs-0705-0965}    & $1 $   & $n^{0.184n+o(n)}$ & $\text{poly}(n)$  & classical, provable \\ \hline
    NV08~\cite{nguyen2008sieve}    & $1 $   & $2^{5.90n+o(n)}$ & $2^{2.95n+o(n)}$  & classical, provable \\ \hline
    PS09~\cite{cryptoeprint:2009:605}    & 1           & $2^{2.465n+o(n)}$         & $2^{1.233n+o(n)}$        & classical, provable \\ \hline
        MV09~\cite{Micciancio:2010:FET:1873601.1873720}    & 1    & unknown       &  $2^{0.41n+o(n)}$        & classical, provable \\ \hline
  MV10~\cite{micciancio2013deterministic}    & 1           & $2^{2n+o(n)}$         & $2^{n+o(n)}$        & classical, provable \\ \hline
    LMP15~\cite{laarhoven2015finding}    & 1           & $2^{1.799n+o(n)}$      &$2^{1.286n+o(n)}$, QRAM     & quantum, provable   \\ \hline
    ADRS15~\cite{aggarwal2015solving}  & 1           & $2^{n+o(n)}$       &$2^{n+o(n)}$          & classical, provable \\ \hline
     $\enump$  & 1           & $2^{2.05n+o(n)}$       &$2^{n/2+o(n)}$          & classical, provable \\ \hline
      $\qsvp$& $1$      & $2^{1.2553n+o(n)}$    &$2^{n/2+o(n)}$         & quantum, provable   \\
    \hline
    \end{tabular}
    \tcaption{{ Known algorithms for solving the shortest vector problem. Note that in LMP15~\cite{laarhoven2015finding} and Laa15~\cite{laarhoven2015search}, they used quantum RAM model (QRAM), or RAM-like quantumly addressable classical memories for the quantum search algorithms, while our quantum algorithm $\qsvp$ only needs polynomially many qubits and $2^{0.5n+o(n)}$ classical space. MV09~\cite{Micciancio:2010:FET:1873601.1873720} experimentally takes $2^{0.415n+o(n)}$ time and  $2^{0.208n+o(n)}$ spaces.} }
    \label{tb:SVP_result}
\end{table}

\section{Preliminaries}
In this paper, the notation $\log$ is the natural logarithm and $\log_2$ is the base-2 logarithm.

\subsection{Lattice}
First we introduce the notation in this paper. Suppose $\textbf{B}= \left\{{\vec{b}_1},\dots,{\vec{b}_n}\right\}$, $n\leq m$, is a set of independent vectors in $\mathbb{R}^m$, where ${\vec{b}_j}$ are considered as column vectors. The lattice space generated by $\bB$ is
\[
L=\left\{\sum_i c_i{\vec{b}_i}\mid  c_i \in \mathbb{Z}\right\},
\]
and $\bB$ is called a \emph{basis} of $L$. In other words, $L$ is the integer span of the basis $\bB$.
Equivalently,
\[
L=\left\{B\vec{x}\mid \vec{x}\in \mathbb{Z}^n \right\},
\]
where   $\mathrm{B}=[{\vec{b}_1}\ {\vec{b}_2}\cdots {\vec{b}_n}]\in\mathbb{R}^{m\times n}$ is called a \emph{basis matrix} of $L$.
Here we only consider lattice bases of full rank, that is, $n=m$. Note that a lattice may be generated by different bases. A simple example is that  $\{(0,1)^T, (1,0)^T\}$ and $\{(100,1)^T, (101,1)^T\}$ generate
the same lattice space  $\{(a,b): a,b\in \mathbb{Z}\}\subset \mathbb{R}^2$.
Thus we may write $L(\bB)$ to indicate that $\bB$ is a basis of $L$.

An element $v$ in $L$ is called a \emph{lattice point},
and its length is $\|v\|$, where $\|\cdot\|$ is the $l_2$ norm in $\mathbb{R}^n$.
For $x, y \in \mathbb{R}^n$, we define an equivalence relation $$y = x \mod L$$ if $y-x \in L$.
For  $t \in \mathbb{R}^n$, the distance between $t$ and $L$ is defined as  $$\dist(t,L)=\min\limits_{x\in L}\|t-x\|.$$
The following definition will be used in the proofs of our results.
\begin{definition}\label{sucMin}
For a lattice $L \subset \mathbb{R}^n$, the $i$th successive minimum of $L$ is
\[
\lambda_i(L) \equiv \inf \{r:\mathrm{dim}(\mathrm{span}(L\cap \mathrm{Ball}(0,r)))\geq i\},
\]
where $\mathrm{Ball}(0,r)$ denotes a closed ball with center at the origin and radius $r$.
\end{definition}
\noindent In particular, $\lambda_1(L)$ is the length of the shortest (nonzero) vector in $L$.

A scaled lattice space $pL$ for some integer $p>1$ is defined as
\begin{align}
pL\equiv \left\{p \cdot \sum_i c_i{\vec{b}_i}\mid  c_i \in \mathbb{Z}\right\}. \label{eq:pL}
\end{align}
For a lattice $L\subset \mathbb{R}^n$, its \emph{dual lattice} $L^*$ is defined as 
\[
L^*\equiv \{y \in \mathbb{R}^n: \langle x,y\rangle \in \mathbb{Z},\forall x\in L\}.
\]
 For a   basis matrix ${B}=[{\vec{b}_1}\ {\vec{b}_2}\cdots {\vec{b}_n}] \in \mathbb{R}^{n\times n}$, its dual  basis matrix  is ${D}=[{\vec{d}_1}\ {\vec{d}_2}\cdots {\vec{d}_n}]= (B^T)^{-1} \in \mathbb{R}^{n\times n}$.
  Then the basis  for the dual lattice $L^*$ is $\textbf{D}= \{{\vec{d}_1},\dots,{\vec{d}_n}\}$.

For more details about lattices, please refer to \cite{regev2009notelattice}.

\subsection{Lattice Problems}
In this subsection, we introduce  two lattice problems.
In the following $\gamma=\gamma(n)\geq 1$ is called the \emph{approximation factor} of the corresponding problem.

\begin{definition}
For $\gamma=\gamma(n)\geq 1$, the search problem $\gamma$-SVP (Shortest Vector Problem)  is defined as follows: The input is a basis $\textbf{B}$ for a lattice $L \subset \mathbb{R}^n$. The goal of $\gamma$-SVP is to output a lattice point $y \in L$ such that $\|y\| \leq \gamma \cdot \lambda_1(L)$.
\end{definition}
\noindent The exact SVP is the case of $\gamma=1$. For $\gamma(n)=\left(\frac{2}{\sqrt{3}}\right)^n$, Lenstra \emph{et al.} showed that a feasible solution for $\gamma$-SVP can be found in polynomial time~\cite{lenstra1982factoring}.

\begin{theorem}[LLL algorithm]\label{LLL}
Given a basis $\textbf{B}$ for a lattice $L \subset \mathbb{R}^n$, there exists an efficient classical algorithm  $\mathsf{LLL}$ that generates a basis $\mathsf{LLL}(\bB)=\{\vec{b}'_1,\dots,\vec{b}'_n\}$ for $L$ in polynomial time such that
$$\min\limits_{i} \|{\vec{b}'_i}\| \leq \left(\frac{2}{\sqrt{3}}\right)^n\lambda_1(L).$$
\end{theorem}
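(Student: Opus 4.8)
The plan is to realize $\mathsf{LLL}$ as the classical Lenstra--Lenstra--Lov\'asz reduction, which produces $\mathsf{LLL}(\bB)=\{\vec{b}'_1,\dots,\vec{b}'_n\}$ by repeatedly applying two lattice-preserving operations until the current basis is LLL-reduced. First I would set up the Gram--Schmidt orthogonalization of the running basis $\{\vec{b}_1,\dots,\vec{b}_n\}$, writing $\vec{b}_i^\ast=\vec{b}_i-\sum_{j<i}\mu_{i,j}\vec{b}_j^\ast$ with $\mu_{i,j}=\langle \vec{b}_i,\vec{b}_j^\ast\rangle/\langle \vec{b}_j^\ast,\vec{b}_j^\ast\rangle$. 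The algorithm then alternates (i) \emph{size reduction}, which replaces $\vec{b}_i$ by $\vec{b}_i-\lfloor\mu_{i,j}\rceil\,\vec{b}_j$ to force $|\mu_{i,j}|\le 1/2$, and (ii) an adjacent \emph{swap} of $\vec{b}_i$ and $\vec{b}_{i+1}$ whenever the Lov\'asz condition $\delta\|\vec{b}_i^\ast\|^2\le \|\vec{b}_{i+1}^\ast\|^2+\mu_{i+1,i}^2\|\vec{b}_i^\ast\|^2$ fails, for a fixed parameter $\delta\in(1/4,1)$. Both operations are integer unimodular column operations, so they leave $L$ invariant; this immediately yields that the output is again a basis of $L$, which is one half of the claim.

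The approximation bound is the clean half, so I would establish it first, assuming the two reducedness conditions hold at termination. The Lov\'asz condition together with $|\mu_{i+1,i}|\le 1/2$ gives $\|\vec{b}_{i+1}^\ast\|^2\ge(\delta-\tfrac14)\|\vec{b}_i^\ast\|^2$, so by induction $\|\vec{b}_1^\ast\|^2\le(\delta-\tfrac14)^{-(i-1)}\|\vec{b}_i^\ast\|^2$ for every $i$. I then invoke the standard lower bound $\lambda_1(L)\ge \min_i\|\vec{b}_i^\ast\|$: for any nonzero $v=\sum_i c_i\vec{b}_i\in L$ with top nonzero coordinate $c_k$, orthogonality gives $\langle v,\vec{b}_k^\ast\rangle=c_k\|\vec{b}_k^\ast\|^2$, whence Cauchy--Schwarz and $|c_k|\ge 1$ force $\|v\|\ge\|\vec{b}_k^\ast\|\ge\min_i\|\vec{b}_i^\ast\|$. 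Combining the two displays, $\|\vec{b}'_1\|=\|\vec{b}_1^\ast\|\le(\delta-\tfrac14)^{-(n-1)/2}\lambda_1(L)$, and taking $\delta$ close enough to $1$ (so that $\delta-\tfrac14$ approaches $\tfrac34$) drives the factor to $(2/\sqrt3)^{\,n}$; since $\min_i\|\vec{b}'_i\|\le\|\vec{b}'_1\|$, the stated bound follows.

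The genuinely hard part will be the polynomial-time guarantee, and that is where I expect the main obstacle to lie. I would introduce the potential $D=\prod_{i=1}^{n}\det(G_i)$, where $G_i$ is the Gram matrix of $\vec{b}_1,\dots,\vec{b}_i$, equivalently $D=\prod_{i=1}^{n}\prod_{j\le i}\|\vec{b}_j^\ast\|^2$. Size reduction changes no $\vec{b}_j^\ast$ and hence fixes $D$, while a swap triggered by a failed Lov\'asz condition multiplies $D$ by a factor at most $\delta<1$, because it decreases exactly the single factor $\det(G_i)$. Since each $\det(G_i)$ is a positive quantity bounded below by a value depending only on the input (for an integral basis each $\det(G_i)$ is a positive integer, so $D\ge 1$), the number of swaps is bounded by a polynomial in $n$ and $\log\max_i\|\vec{b}_i\|$. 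The delicate bookkeeping I would treat as the crux is twofold: (a) showing the coefficients $\mu_{i,j}$ and the running vectors retain polynomially bounded bit-size throughout, so that each arithmetic step itself costs polynomial time, and (b) pinning down a fixed $\delta<1$ that simultaneously keeps the swap count polynomial (note $\log(1/\delta)$ shrinks as $\delta\to 1$) and yields the limiting constant $2/\sqrt3$. Both are standard but require care, and they — rather than the linear-algebra estimate of the previous paragraph — are the real content of the polynomial-time claim.
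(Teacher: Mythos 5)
The paper does not actually prove this theorem: it is imported as a known result with a citation to Lenstra, Lenstra and Lov\'asz \cite{lenstra1982factoring}, so there is no internal proof to compare against. Your reconstruction is precisely the standard analysis from the cited literature: size reduction plus Lov\'asz swaps, the lower bound $\lambda_1(L)\ge\min_i\|\vec{b}_i^\ast\|$ via the Gram--Schmidt orthogonality argument, and termination via the potential $D=\prod_i\det(G_i)$, which size reduction fixes and each swap shrinks by a factor $\delta$. All of these steps are correct as stated.

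One quantitative point is a genuine slip, and you half-noticed it in your item (b): no \emph{fixed} $\delta<1$ can deliver the constant $(2/\sqrt{3})^n$ for all $n$. Your bound is $\|\vec{b}'_1\|\le(\delta-\tfrac14)^{-(n-1)/2}\lambda_1(L)$, while the target is $(2/\sqrt{3})^n=(4/3)^{n/2}$; for any fixed $\delta<1$ one has $(\delta-\tfrac14)^{-1/2}>(4/3)^{1/2}$, so $(\delta-\tfrac14)^{-(n-1)/2}$ eventually exceeds $(4/3)^{n/2}$ as $n$ grows. The standard repair is to let $\delta$ depend on $n$: the exponent $n$ in the target (versus $n-1$ in your bound) leaves exactly enough slack, e.g.\ taking $\delta_n=\tfrac14+(3/4)^{n/(n-1)}<1$ gives $(\delta_n-\tfrac14)^{-(n-1)/2}=(4/3)^{n/2}$ exactly. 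Since $1-\delta_n=\Theta(1/n)$, we get $\log(1/\delta_n)=\Theta(1/n)$, so your swap-count bound $O\bigl(\log(D_{\mathrm{init}}/D_{\mathrm{min}})/\log(1/\delta_n)\bigr)$ is inflated only by a factor $O(n)$ and polynomial running time survives. With that substitution, together with the standard bit-size bookkeeping for rational inputs that you correctly flagged as item (a), your proposal becomes a complete and faithful proof of the statement as the paper uses it.
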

\noindent 
It is obvious that the $\mathsf{LLL}$ algorithm solves the $\left(\frac{2}{\sqrt{3}}\right)^n$-SVP in polynomial time.

\begin{definition}
For $\alpha = \alpha(n) < 1/2 $, the search problem $\alpha$-BDD (Bounded Distance Decoding Problem) is defined as follows: The input is a basis $\textbf{B}$ for a lattice $L \subset \mathbb{R}^n$ and a target vector $t \in \mathbb{R}^n$ with $\dist(L,t) \leq \alpha \cdot \lambda_1(L)$. The goal of $\alpha$-BDD is to output a vector $y\in L$ such that $\|y-t\| \leq \alpha \cdot \lambda_1(L)$.
\end{definition}
\noindent  The parameter $\alpha$ is chosen so that 
$\alpha\cdot \lambda_1(L)$
is the largest decoding distance such that a lattice point (say $y$) can be recovered from a displaced vector (say $t$).

Note that most of the lattice problems become more difficult as the approximation factor $\gamma$ gets smaller, but $\alpha$-BDD becomes harder as $\alpha$ gets larger.

\subsection{Discrete Gaussian Distribution}

To solve the above mentioned lattice problems, techniques using the so-called \emph{discrete Gaussian distribution}~\cite{micciancio2007worst}, are commonly used. In the following we will introduce the  discrete Gaussian distribution and a BDD oracle built on it.

Define a function $\rho_s:\mathbb{R}^n\rightarrow \mathbb{R}$ as $$\rho_s(x) \equiv e^{\frac{-\pi \|x\|^2}{s^2}}$$
for   $s>0$.  
 When $s=1$, we will omit the subscript and simply write $\rho(x)$.
For a discrete set $A\subset \mathbb{R}^n$,  define $\rho_s(A)=\sum\limits_{a\in A} \rho_s(a)$.

\begin{definition}
   Consider a lattice $L\subset \mathbb{R}^n$ and $t \in \mathbb{R}^n$.
    The \emph{discrete Gaussian distribution} over $L + t$ with parameter $s$ is $$D_{L+t,s}(x)= \frac{\rho_s(x)}{\rho_s(L+t)}.$$
\end{definition}
\noindent (When $s=1$, we simply denote it by $D_{L+t}.$)	
Thus the probability of drawing $x\in L+t$ according to the discrete Gaussian distribution is proportional to $\rho_s(x)$. As the standard deviation $s$ grows, the discrete Gaussian distribution would become ``smoother." In \cite{micciancio2007worst}, Micciancio and Regev showed for large enough $s$, $D_{L+t,s}$ behaves in many respects like a continuous one.
To quantify how smooth the discrete Gaussian distribution is, they define a \emph{smoothing parameter} as follows.

\begin{definition}
Suppose  $L \subset \mathbb{R}^n$ is a lattice space.  For $\epsilon > 0$, the \emph{smoothing parameter} $\eta_\epsilon(L)$   is the unique value satisfying $\rho_{1/\eta_\epsilon(L)} (L^* \setminus \{ 0 \}) \leq \epsilon$.
\end{definition}
\bmark{}

\begin{definition}
Suppose $\sigma$ is a function that maps lattices to non-negative real numbers. Let  $\epsilon = \epsilon(n) \geq 0$, and $m = m(n) \in \mathbb{N}$. The problem $\epsilon$-DGS$^m_\sigma$ (Discrete Gaussian Sampling) is defined as follows: The input is a basis $\textbf{B}$ for a lattice $L \subset \mathbb{R}^n$ and a parameter $s > \sigma(L)$. The goal of $\epsilon$-DGS$^m_\sigma$ is to output   $m$ vectors so that the joint distribution for these vectors is $\epsilon$-close to $D_{L,s}$.
\end{definition}

Following the method in~\cite{2014arXiv1409.8063D} to construct a $\BDD$ solver, we first define a \emph{periodic Gaussian function}
 $$f_L(t)\equiv \frac{\rho(L+t)}{\rho(L)}$$
  for a lattice $L$.
  It is not hard to see that $f_L$ is periodic over $L$: for  $x \in L$, we have $f_L(x+t)=f_L(t).$ The idea of periodic Gaussian function was introduced by Aharonov and Regev~\cite{aharonov2005lattice} and was improved by Dadush \emph{et al.}  for solving the Closest Vector Problem (CVP) with preprocessing~\cite{2014arXiv1409.8063D}.
  CVP asks to output a closet lattice point to a given vector.
    In particular, when the target vector is close enough to the lattice, we can use a periodic Gaussian function to find its closest lattice point.
Aharonov and Regev found that the Poisson summation formula gives the identity \cite{aharonov2005lattice}:
\[
f_L(t)=\mathbb{E}_{w\sim D_{L^*}}[\cos(2\pi\langle w,t\rangle)].
\]
Hence  $f_L(t)$ can be approximated  by
\[
f_W(t)\equiv \frac{1}{N}\sum\limits_{i=1}^N \cos(2\pi\langle w_i,t\rangle),
\]
where $W=(w_1,\dots,w_N) \subset L^*$ are independent and identically-distributed (i.i.d.) samples from $D_{L^*}$ for sufficiently large $N$.
When $N$ is $O(\poly(n))$ for $n$ large enough, $f_W$ will approximate $f_L$ in statistical distance with high probability~\cite{aharonov2005lattice}.

Dadush \emph{et al.} used the above idea to construct  a $\BDD$ solver using periodic Gaussian functions \cite[Theorem 3.1]{2014arXiv1409.8063D}
 as shown in Algorithm \ref{BDD_Oracle}.
Line \ref{alg:gradientascent} is the step of \emph{gradient ascent}, which is used to approach a local maximum of a function and is explained as follows.
\begin{theorem}\label{gradientascent}\cite[Proposition 3.2]{2014arXiv1409.8063D}
Let $L \subset \mathbb{R}^n$ be a lattice with $\rho(L)=1+\epsilon$ for $\epsilon \in (0,1/200)$. Let $s_\epsilon=(\frac{1}{\pi}\log\frac{2(1+\epsilon)}{\epsilon})^{\frac{1}{2}}$, $\delta_{max}=\frac{1}{2}-\frac{2}{\pi s^2_\epsilon}$, and $\zeta(t)=\max\{1/8,\|t\|/s_\epsilon\}$. Let $W=(w_1,\dots,w_N)$ be sampled independently from $D_{L^*}$. If $N=\Omega(n\log (1/\epsilon)/\sqrt{\epsilon})$, then with probability at least $1-2^{-\Omega(n)}$,
\[
\left\|\frac{\nabla f_W(t)}{2\pi f_W(t)}+t\right\|\leq \epsilon^{(1-2\zeta(t))/4}\|t\|
\]
  for  $t \in \mathbb{R}^n$ with $\|t\|\leq \delta_{max} s_\epsilon$.
\end{theorem}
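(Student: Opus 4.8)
The plan is to reduce the empirical statement to a deterministic estimate on the true periodic Gaussian $f_L$, together with a concentration argument that transfers it to the sampled function $f_W$. First I would rewrite the quantity of interest as
\[
\frac{\nabla f_W(t)}{2\pi f_W(t)}+t=\frac{1}{f_W(t)}\left(\frac{\nabla f_W(t)}{2\pi}+t\,f_W(t)\right),
\]
so that it suffices to (a) lower bound $f_W(t)$ in order to control the denominator and (b) upper bound the numerator $\frac{\nabla f_W(t)}{2\pi}+t\,f_W(t)$. Both will be obtained by comparing $f_W$ and $\nabla f_W$ to their expectations $f_L(t)=\mathbb{E}_{w\sim D_{L^*}}[\cos(2\pi\langle w,t\rangle)]$ and $\nabla f_L(t)=-2\pi\,\mathbb{E}_{w\sim D_{L^*}}[w\sin(2\pi\langle w,t\rangle)]$ supplied by the Poisson summation identity.

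For the deterministic core I would work directly from $f_L(t)=\rho(L+t)/\rho(L)$. Differentiating $\rho(L+t)=\sum_{x\in L}e^{-\pi\|x+t\|^2}$ and simplifying gives the exact identity
\[
\frac{\nabla f_L(t)}{2\pi}+t\,f_L(t)=-\frac{1}{\rho(L)}\sum_{x\in L\setminus\{0\}}x\,\rho(x+t),
\]
in which the $x=0$ term cancels; this is precisely why the pure Gaussian produces a zero step and hence why the residual is governed entirely by the nonzero lattice points. The hypothesis $\rho(L)=1+\epsilon$ says exactly that $\rho(L\setminus\{0\})=\epsilon$, and since $s_\epsilon^2=\frac1\pi\log\frac{2(1+\epsilon)}{\epsilon}$ it also forces $\lambda_1(L)$ to be of order $s_\epsilon$. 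Writing $\rho(x+t)=\rho(x)e^{-2\pi\langle x,t\rangle-\pi\|t\|^2}\le\rho(x)e^{2\pi\|x\|\,\|t\|}$ and using $\|t\|\le\delta_{max}s_\epsilon$ lets me bound $\sum_{x\neq0}\|x\|\rho(x+t)$ by a multiple of $\epsilon^{1-2\zeta(t)}$, the exponent arising because the dominant terms sit at $\|x\|\approx\lambda_1(L)\approx s_\epsilon$, where $\rho(x)e^{2\pi\|x\|\,\|t\|}\approx\epsilon\cdot(2/\epsilon)^{2\|t\|/s_\epsilon}$. Combined with the lower bound $f_L(t)\ge e^{-\pi\|t\|^2}/(1+\epsilon)$, this yields a deterministic contraction bound for $f_L$ of the desired shape.

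Next I would transfer these estimates to $f_W$, and do so uniformly over the ball $\{\|t\|\le\delta_{max}s_\epsilon\}$ since the gradient-ascent trajectory must stay inside it. For a fixed $t$, because $\cos(2\pi\langle w_i,t\rangle)\in[-1,1]$, Hoeffding's inequality gives $|f_W(t)-f_L(t)|\le\tau$ with probability $1-2e^{-\Omega(N\tau^2)}$, which both controls the denominator and keeps $f_W(t)$ bounded away from $0$. The numerator $\frac{\nabla f_W(t)}{2\pi}+t\,f_W(t)=\frac1N\sum_i\bigl(t\cos(2\pi\langle w_i,t\rangle)-w_i\sin(2\pi\langle w_i,t\rangle)\bigr)$ is more delicate because its summands are unbounded through the factor $w_i$; I would truncate to the event $\|w_i\|=O(\sqrt n)$, which holds for every sample with overwhelming probability as $D_{L^*}$ has Gaussian tails, and apply a vector Bernstein (or coordinatewise Hoeffding) bound to the resulting bounded summands. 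To upgrade from a fixed $t$ to all $t$ in the ball I would place an $\epsilon$-net of size $2^{O(n)}$ on it, union-bound the pointwise estimates over the net, and extend to arbitrary $t$ using the Lipschitz continuity of $f_W$ and $\nabla f_W$. This is exactly where the factor $n$ in $N=\Omega\!\left(n\log(1/\epsilon)/\sqrt\epsilon\right)$ is spent: it makes each pointwise failure probability small enough that the $2^{O(n)}$ union bound still yields $2^{-\Omega(n)}$ overall.

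Finally I would combine the pieces: divide the numerator bound by the lower bound on $f_W(t)$, divide through by $\|t\|$, and absorb the concentration slack. Tracking exponents through these divisions is what degrades the clean deterministic exponent $1-2\zeta(t)$ to the stated $(1-2\zeta(t))/4$; the floor $\zeta(t)\ge1/8$ keeps the resulting factor a genuine contraction (bounded by $\epsilon^{3/16}<1$) even as $\|t\|\to0$, while $\delta_{max}<\tfrac12$ guarantees $1-2\zeta(t)>0$ so that the bound is meaningful. I expect the main obstacle to be the concentration of the gradient numerator: obtaining a clean high-probability bound despite the unbounded $w_i$, while keeping the resulting error small relative to the $\epsilon$-scale of the deterministic residual so that it does not swamp the contraction factor. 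Balancing the truncation radius, the Bernstein variance proxy, and the sample size $N$ so as to land exactly at exponent $(1-2\zeta(t))/4$ with failure probability $2^{-\Omega(n)}$ is the quantitatively delicate step.
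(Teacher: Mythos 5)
A preliminary remark: the paper contains no proof of this statement at all --- Theorem~\ref{gradientascent} is imported verbatim as Proposition~3.2 of Dadush, Regev and Stephens-Davidowitz~\cite{2014arXiv1409.8063D}, so your attempt can only be measured against the proof in that work. Your architecture is a faithful reconstruction of it: the exact residual identity $\frac{\nabla f_L(t)}{2\pi}+t f_L(t)=-\frac{1}{\rho(L)}\sum_{x\in L\setminus\{0\}}x\,\rho(x+t)$ is correct, the lower bound $f_L(t)\geq e^{-\pi\|t\|^2}/(1+\epsilon)\gtrsim\epsilon^{1/4}$ on the ball $\|t\|\leq\delta_{max}s_\epsilon$ is the right denominator control, and your accounting of where $N=\Omega(n\log(1/\epsilon)/\sqrt{\epsilon})$ is spent (pointwise additive precision of order $\epsilon^{1/4}$, surviving a union bound over a net --- which, note, has size $2^{O(n\log(1/\epsilon))}$ rather than $2^{O(n)}$ once the mesh is polynomial in $\epsilon$, though the Hoeffding exponent $N\tau^2=\Omega(n\log(1/\epsilon))$ still absorbs it) is consistent with the theorem's parameters.

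There is, however, a genuine gap in the deterministic core, and it propagates into your concentration step. The bound you extract via $\rho(x+t)\leq\rho(x)e^{2\pi\|x\|\|t\|}$ gives $\bigl\|\sum_{x\neq 0}x\,\rho(x+t)\bigr\|\lesssim\epsilon^{1-2\zeta(t)}$ with \emph{no} factor of $\|t\|$, whereas the claimed conclusion vanishes linearly in $\|t\|$ (as it must: the left-hand side is zero at $t=0$ by the $x\leftrightarrow -x$ symmetry). ``Dividing through by $\|t\|$'' at the end therefore cannot close the argument, and for $\zeta(t)=1/8$ your estimate fails against the target $\epsilon^{3/16}\|t\|$ as $\|t\|\to 0$. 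The missing idea is the $\pm x$ cancellation: pairing $x$ with $-x$ yields $x\bigl(\rho(x+t)-\rho(x-t)\bigr)=-2x\,\rho(x)\rho(t)\sinh(2\pi\langle x,t\rangle)$, and $|\sinh u|\leq|u|e^{|u|}$ supplies the linear factor $\|t\|$; equivalently, on the dual side the numerator is $\mathbb{E}\bigl[t\cos(2\pi\langle w,t\rangle)-w\sin(2\pi\langle w,t\rangle)\bigr]$ and $|\sin u|\leq|u|$ does the same job. The same defect afflicts the sampling transfer: a uniform additive error $\eta\sim\epsilon^{1/4}$ over the ball cannot beat $\epsilon^{(1-2\zeta)/4}\|t\|$ once $\|t\|\ll\epsilon^{1/16}$, so instead of a blanket vector-Bernstein bound you must observe that each summand $t\cos(2\pi\langle w_i,t\rangle)-w_i\sin(2\pi\langle w_i,t\rangle)$ has norm $O\bigl((1+\|w_i\|^2)\|t\|\bigr)$, so the empirical deviation itself scales linearly in $\|t\|$, with the second moment of $D_{L^*}$ serving as the variance proxy. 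With these two repairs your sketch matches the DRS proof; as written it does not establish the statement in the regime of small $\|t\|$.
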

Theorem~\ref{gradientascent} shows that for any vector $t\in \mathbb{R}^n$ that is not too far from $L$, one can find a vector $t'$ closer to $L$ by doing   gradient ascent on the periodic Gaussian function. Once it become close enough to $L$,  we can find the closest vector in only polynomial time \cite{babai1986lovasz}.
In the proof of \cite[Theorem 3.1]{2014arXiv1409.8063D}, Dadush \emph{et al.} showed that for any vector $t$ is not too far from $L$, one can find the closet lattice point to $t$ by doing gradient ascent twice.
For more details, please refer to \cite{2014arXiv1409.8063D}.

\begin{algorithm}[h]

\SetAlgoLined

\SetKwData{Left}{left}\SetKwData{This}{this}\SetKwData{Up}{up}

\SetKwFunction{Union}{Union}\SetKwFunction{FindCompress}{FindCompress}

\SetKwInOut{Input}{input}\SetKwInOut{Output}{output}

\Input{lattice L(\textbf{B}), target vector t}

\Output{closest vector cv}

function Round$: \mathbb{R}^n\rightarrow \mathbb{Z}^n$ that rounds every element of an input vector;

Initialize: $count=0$;

Preprocessing: $W=(w_1,\dots,w_N)$   sampled independently from $D_{L^*}$;

\While{$count<2$}{

$f_W(t)\equiv \frac{1}{N}\sum\limits_{i=1}^N cos(2\pi\langle w_i,t\rangle)$;

$t=\frac{\nabla f_W(t)}{2\pi f_W(t)}+t$; \label{alg:gradientascent}

$count$++;

}
cv=$\mathrm{B}\cdot$ Round($\mathrm{B}^{-1}t$); // $\mathrm{B}$ is the basis matrix of $L$

\Return cv;
\caption{BDD solver constructed from a periodic Gaussian function}

\label{BDD_Oracle}

\end{algorithm}

\subsection{Quantum operators and some quantum algorithms}
In this paper we use the Dirac ket-bra notation. A qubit is a unit vector in $\mathbb{C}^2$ with two (ordered) basis vectors $\{\ket{0},\ket{1}\}$. $I=\begin{bmatrix}1 &0\\0&1\end{bmatrix}, X=\begin{bmatrix}0 &1\\1&0\end{bmatrix},  Z=\begin{bmatrix}1 &0\\0&-1\end{bmatrix},$ and  $Y=iXZ$ are the Pauli Matrices.
A universal set of gates is
 \begin{align*}
H&= \frac{1}{\sqrt{2}}\begin{bmatrix}1 &1\\1&-1\end{bmatrix},\
S=\begin{bmatrix}1 &0\\0& i\end{bmatrix},\    T=e^{i\pi/8}\begin{bmatrix}e^{-i\pi/8} &0\\0& e^{i\pi/8}\end{bmatrix},\\
CNOT&=\ket{0}\bra{0}\otimes I+ \ket{1}\bra{1}\otimes X.
 \end{align*}
We will use  a three-qubit gate, the Toffoli gate, defined by
\[
\Tof\ket{a}\ket{b}\ket{c}=
\begin{cases}
\ket{a}\ket{b}\ket{1\oplus c}, & \mbox{if $a=b=1$;}\\
\ket{a}\ket{b}\ket{c}, & \mbox{otherwise},
\end{cases}
\]
for $a,b,c\in\{0,1\}$. Toffoli gate can be efficiently decomposed into $CNOT, H, S,$ and $T$ gates~\cite{NC00} and hence it is considered as an elementary quantum gate in this paper.
In particular, Toffoli gate together with ancilla preparation are universal for classical computation:
It is easy to see that a NAND gate can be implemented by a Toffoli gate: $\Tof\ket{a}\ket{b}\ket{1}=\ket{a}\ket{b}\ket{\text{NAND}(a,b)}$,
where $\text{NAND}(a,b)=0$, if $(a,b)=(1,1)$, and $\text{NAND}(a,b)=1$, otherwise.

\begin{definition}[Search problem]
Suppose we have a set of objects named $\{1,2,\dots, N\}$, of which some are targets. Suppose  $\cO$ is an oracle that identifies the targets. 
The goal of a search problem is to find a target $i \in \{1,2,\dots, N\}$ by making queries to the oracle $\cO$.
\end{definition}

A search problem  is called a \emph{unique} search problem if there is only one target, 
 and an \emph{unknown target} search problem if at least a target exists but the number of targets is unknown.
Grover provided a quantum algorithm, that solves a unique search problem with $O(\sqrt{N})$ queries~\cite{Grover1996rk}.  When the number of targets is unknown, Brassard \emph{et al.} provided a modified Grover algorithm that solves the search problem with $O(\sqrt{N})$ queries~\cite{boyer1996tight}, which is of the same order as the query complexity of the Grover search. In general, we will simply call these algorithms by \emph{Grover search}.

\begin{theorem}\label{thm:Q_search}[Grover search]
Suppose we have an unknown target search problem with objects $\{1,2,\dots, N\}$ and a quantum oracle $\cO:\mathbb{C}^{N}\rightarrow \mathbb{C}^{N}$ such that $\cO\ket{i}=-\ket{i}$ if $i$ is a target, and  $\cO\ket{i}=\ket{i}$, otherwise.   
Then there exists a quantum algorithm that solves  the unknown target search problem  with probability at least $1/2$ and it needs  $O(\sqrt{N})$ queries to $\mathcal{O}$.
\end{theorem}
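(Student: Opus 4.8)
The plan is to prove the Boyer--Brassard--H\o yer--Tapp version of Grover search, in which the number of targets is unknown. First I would fix the geometric picture underlying a single Grover iteration. Let $t$ denote the (unknown) number of targets, let $\ket{\psi}=\frac{1}{\sqrt{N}}\sum_{i=1}^{N}\ket{i}$ be the uniform superposition prepared by $H^{\otimes n}$, and split the space into the target subspace and its complement. Writing $\sin\theta=\sqrt{t/N}$, I would show that the Grover operator $G=-H^{\otimes n}\cO_0 H^{\otimes n}\,\cO$, where $\cO_0$ flips the sign of $\ket{0\cdots 0}$, acts as a rotation by angle $2\theta$ inside the two-dimensional invariant subspace spanned by the normalized target and non-target states. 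Consequently, after $j$ applications of $G$ the amplitude on the target subspace is $\sin((2j+1)\theta)$, so measuring yields a target with probability $\sin^2((2j+1)\theta)$.

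The key technical step is an averaging lemma that removes the need to know $t$. If the number of iterations $j$ is chosen uniformly at random from $\{0,1,\dots,m-1\}$, then the expected success probability is
\begin{equation}
P_m=\frac{1}{m}\sum_{j=0}^{m-1}\sin^2\bigl((2j+1)\theta\bigr)=\frac{1}{2}-\frac{\sin(4m\theta)}{4m\sin(2\theta)},
\end{equation}
which I would derive from the identity $\sin^2 x=\tfrac12(1-\cos 2x)$ together with the closed form of a geometric sum of cosines. Since $|\sin(4m\theta)|\le 1$, this gives $P_m\ge \tfrac12-\tfrac{1}{4m\sin(2\theta)}$, so $P_m\ge \tfrac14$ whenever $m\ge 1/\sin(2\theta)$. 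Because $\sin(2\theta)=2\sqrt{t/N}\sqrt{1-t/N}$ and $t\ge 1$, the critical scale is $m^{*}=\Theta(\sqrt{N/t})=O(\sqrt{N})$.

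With these pieces, the algorithm is the standard exponential schedule: initialize $m=1$ and a growth factor $\lambda\in(1,4/3)$; in each round pick $j$ uniformly from $\{0,\dots,\lceil m\rceil-1\}$, apply $j$ Grover iterations to $\ket{\psi}$, measure, and test the outcome with one query to $\cO$; if it is a target, halt and output it, otherwise update $m\leftarrow\min(\lambda m,\sqrt{N})$ and repeat. I would bound the query cost by noting that each round uses at most $O(m)$ queries, and that $m$ grows geometrically, so the total cost accrued before $m$ first exceeds $m^{*}$ is a geometric sum dominated by its last term, namely $O(m^{*})=O(\sqrt{N/t})$. Once $m\ge m^{*}$ the averaging lemma guarantees each round succeeds with probability at least $1/4$, so the expected number of further rounds is $O(1)$, each still costing $O(\sqrt{N/t})$ queries; hence the expected total number of queries before finding a target is $O(\sqrt{N/t})\le O(\sqrt{N})$.

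Finally, to turn the expected-query guarantee into the stated fixed-budget statement, I would run the schedule with a hard cutoff after $C\sqrt{N}$ queries for a sufficiently large constant $C$ equal to twice the expectation bound above. Since at least one target exists, Markov's inequality applied to the (nonnegative) number of queries until the first success shows that the procedure fails to find a target within this budget with probability at most $1/2$; therefore it succeeds with probability at least $1/2$ using $O(\sqrt{N})$ queries, as claimed. The main obstacle I anticipate is the averaging lemma: establishing the exact trigonometric identity for $P_m$ and extracting a clean lower bound that holds uniformly over the unknown $t$ (and near the boundary where $t$ is close to $N$, where one instead observes that even a single round already succeeds with high probability) is where the real care is needed; the geometric-schedule accounting and the Markov step are comparatively routine.
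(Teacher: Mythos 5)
Your proposal is correct and follows precisely the Boyer--Brassard--H\o yer--Tapp argument that the paper itself invokes without proof (Theorem~\ref{thm:Q_search} is stated as a known result with a citation to \cite{boyer1996tight}): the two-dimensional rotation picture, the averaging identity $P_m=\frac{1}{2}-\frac{\sin(4m\theta)}{4m\sin(2\theta)}$ with the lower bound $P_m\geq 1/4$ for $m\geq 1/\sin(2\theta)$, the exponential schedule with $\lambda\in(1,4/3)$ yielding expected cost $O(\sqrt{N/t})$ (with the $t>3N/4$ boundary handled separately, as you note), and the Markov cutoff converting the expectation bound into a fixed budget with success probability at least $1/2$ are all exactly as in the cited source. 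The one small modeling nit is that your step ``test the outcome with one query to $\cO$'' is not literally possible with the bare phase oracle as stated (a phase applied to a single computational basis state is global and unobservable), so one must assume a bit oracle or a classically computable membership predicate --- which is harmless here, since in the paper's application the quantity $\|f_{BDDP(L)}(i)\|$ defining targethood is classically computable.
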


Durr and Hoyer found that the Grover search can be used to find, in  an unsorted table of $N$ values, the index that corresponds to the  minimum
with only $O(\sqrt{N})$ queries~\cite{durr1996quantum}.

\begin{theorem}\label{thm:Q_min}
Let  $T[1,2,\dots, N]$ be an unsorted table of $N$ items, each
holding a value from an ordered set. Suppose we have a quantum oracle
$\cO$
such that $\cO\ket{i}\ket{0}=\ket{i}\ket{T[i]}$.   
Then there exists a quantum algorithm that finds the index
$y$ such that $T[y]$ is the minimum  with probability at least $1/2$  and it needs  $O(\sqrt{N})$ queries to $\mathcal{O}$.
\end{theorem}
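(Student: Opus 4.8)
The plan is to establish the Durr--Hoyer minimum-finding procedure: maintain a \emph{threshold index} $y$, repeatedly use Grover search to locate some index whose value beats the current threshold, and promote it to be the new threshold. Concretely, I would first draw the initial $y$ uniformly at random from $\{1,\dots,N\}$. Then I would repeat the following round: using the oracle $\cO$, mark exactly those indices $i$ with $T[i] < T[y]$, and apply the unknown-target Grover search of Theorem~\ref{thm:Q_search} to this marked set; if it returns an index $i'$ with $T[i'] < T[y]$, set $y \leftarrow i'$. The total number of queries spent across all rounds is capped at $c\sqrt{N}$ for a suitable constant $c$, after which the current $y$ is returned.

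The first step is to turn $\cO$ into the phase oracle required by Theorem~\ref{thm:Q_search}. Since the current threshold value $T[y]$ is a classically known number, I would, on input $\ket{i}$, append an ancilla, query $\cO$ to produce $\ket{i}\ket{T[i]}$, reversibly compute the predicate $[\,T[i] < T[y]\,]$ into a fresh qubit, apply a $Z$-type phase flip conditioned on that qubit, and then uncompute the comparison and the call to $\cO$. This realizes $\ket{i}\mapsto -\ket{i}$ exactly on the marked indices using only a constant number of queries to $\cO$ per Grover oracle call, so a round searching among $t$ marked items costs $O(\sqrt{N/t})$ queries by Theorem~\ref{thm:Q_search}.

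The heart of the argument is the query-complexity and correctness analysis, which I would carry out via the induced random process on \emph{value-ranks} (rank $1$ = minimum, rank $N$ = maximum). Ignoring Grover's failures for the moment, each successful round replaces a threshold of rank $r$ by an index chosen uniformly among the $r-1$ strictly smaller ones; thus the sequence of thresholds is the ``downward record'' chain $X_0$ (uniform on $\{1,\dots,N\}$), $X_{k+1}$ uniform on $\{1,\dots,X_k-1\}$. The key lemma is that the probability of ever visiting rank $r$ equals $1/r$. I would prove this by showing, via the recursion $g(s,r)=\frac{1}{s-1}\bigl(1+\sum_{s'=r+1}^{s-1} g(s',r)\bigr)$, that the hitting probability of rank $r$ from any larger rank $s$ satisfies $g(s,r)=1/r$ by induction on $s$, and then averaging over the uniform start $X_0$. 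Because a round at rank $r$ searches among $r-1$ marked items and so costs $O(\sqrt{N/(r-1)})$ queries, the expected total is $\sum_{r=2}^N \frac{1}{r}\cdot O\!\bigl(\sqrt{N/(r-1)}\bigr)=O\!\bigl(\sqrt{N}\,\sum_r r^{-3/2}\bigr)=O(\sqrt{N})$, the sum converging.

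Finally I would tie the budget cap to the success probability: by Markov's inequality applied to the expected-query bound, with probability at least $1/2$ the chain reaches rank $1$ before the budget $c\sqrt{N}$ is exhausted; and once $y$ has rank $1$ there are no smaller-valued indices, so no further round can displace it and the returned $y$ is the true minimum. I expect the main obstacle to be this complexity/correctness bookkeeping rather than the algorithm itself: one must (i) prove the $1/r$ hitting-probability identity cleanly, and (ii) handle the fact that the unknown-target search of Theorem~\ref{thm:Q_search} may fail on any given round and that the number of rounds is itself random. The cap-the-total-budget-and-apply-Markov formulation is exactly what lets me avoid per-round amplification and absorb both the search-failure probability and the random round count into a single constant-factor overhead on $\sqrt{N}$.
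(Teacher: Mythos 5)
The paper does not actually prove Theorem~\ref{thm:Q_min}: it imports the result with a citation to Durr and Hoyer \cite{durr1996quantum}, so there is no internal proof to compare against, and your proposal is in effect a correct reconstruction of exactly that cited argument --- threshold updates via unknown-target Grover search, the $1/r$ hitting-probability lemma for the downward record chain, the $\sum_r r^{-3/2}$ summation giving expected cost $O(\sqrt{N})$, and a Markov-inequality budget cap converting the expectation bound into success probability at least $1/2$. One point of precision: you attribute the per-round cost $O(\sqrt{N/t})$ (for $t$ marked items) to Theorem~\ref{thm:Q_search}, but as stated that theorem only promises $O(\sqrt{N})$ queries irrespective of $t$; with only that weaker bound your expected total would degrade to $O(\sqrt{N}\log N)$. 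The stronger expected-query bound $O(\sqrt{N/t})$ is genuinely needed and is what Boyer, Brassard, Hoyer, and Tapp \cite{boyer1996tight} actually prove, so your argument is sound once the invocation is redirected to that result rather than to the weaker statement in the paper. Two smaller glosses, both standard and both handled in the original Durr--Hoyer analysis, deserve a sentence each in a full write-up: the rank-chain analysis assumes the search returns a (near-)uniformly random marked index, and ties in $T$ should be broken (e.g.\ lexicographically by index) so that ranks are well defined.
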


\section{Enumeration Algorithm and BDDP Oracle}
In this section we will introduce a classical algorithm $\enum$ that makes a list of the lattice points within a specified distance (called \emph{enumeration radius}) to a target vector.
Thus we can use $\enum$  to solve SVP by choosing a suitable  enumeration radius.
In  Section~\ref{sec:enum}, we will give the algorithm  $\enum$, which is built on  a classical oracle $\bdd$.
To reduce the time complexity,  in Section~\ref{sec:BDDP}, we will construct a BDDP solver,  which is a $\BDD$ solver with preprocessing, and analyze the total time complexity and space complexity.
We conclude in Section~\ref{sec:cSVP} with a  classical algorithm $\enump$ for SVP with time complexity $O(2^{2.0478n})$ and space complexity $O(2^{n/2})$. To the best of our knowledge, the space complexity of our algorithm $\enump$ is more efficient than other single exponential time classical algorithms. (See Table~\ref{tb:SVP_result} for a comparison.)

\subsection{Enumeration Algorithm $\enum$} \label{sec:enum}

Given a lattice $L\subset \mathbb{R}^n$, a target vector $t\in \mathbb{R}^n$, and a parameter $\delta$, we would like to find all the lattice points within  distance $\delta$ to $t$.
If this can be done,  SVP can be reduced to an enumeration problem:
make a list of lattice points that are within  distance $\delta>\lambda_1(L)$ to the origin and then find the shortest vector  in the list.
It may be difficult to generate the list at a first glance, since there are countably infinite lattice points.
However, inspired by Kirchner and Fouque's lattice enumeration algorithm~\cite{kirchner2016time}, we find that it suffices to consider only a finite number of lattice points by using the properties of BDD oracle.

\begin{lemma} \label{lemma:y_equiv}
Consider a   lattice  $L \subset \mathbb{R}^n$, $0.5 > \alpha=\alpha(L)>0$,  and $t \in \mathbb{R}^n$, satisfying $\dist(L,t)<\alpha  \lambda_1(L)$. Suppose $\bdda$ is an $\alpha$-$\BDD$ oracle.
 Let  $p$ be an integer  such that $p\alpha\geq 1$.
Then we have
  \begin{align}
  y = (y \mod pL) - p\cdot\bdda\left(L, \left(\frac{y}{p} \mod L\right) - \frac{t}{p}\right)\label{eq:y_equiv}
  \end{align}
  for any  $y \in \mathbb{R}^n$ such that $\|y- t\|<p\alpha  \lambda_1(L)$, where $pL$ is defined in (\ref{eq:pL}).
\end{lemma}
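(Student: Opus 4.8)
The plan is to reduce the claimed identity to a single statement: that the $\alpha$-BDD oracle, queried on $s := \left(\frac{y}{p} \bmod L\right) - \frac{t}{p}$, returns exactly the lattice vector $-m$, where $m \in L$ is the vector determined by $\frac{y}{p}\bmod L = \frac{y}{p} - m$. Granting this, the full identity will follow by combining it with the compatibility of reduction modulo $L$ and modulo $pL$.

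First I would record the uniqueness of BDD solutions. Since $\alpha < 1/2$, if two lattice points $z_1, z_2 \in L$ both lie within distance $\alpha\lambda_1(L)$ of a common point, then $\|z_1 - z_2\| < 2\alpha\lambda_1(L) < \lambda_1(L)$; as $z_1 - z_2 \in L$, this forces $z_1 = z_2$. Hence at most one lattice point lies within the decoding radius of any target, so whatever value the oracle returns is the unique correct one, and it suffices to exhibit a valid answer.

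Next I would analyze the oracle call. Writing $\frac{y}{p}\bmod L = \frac{y}{p} - m$ with $m \in L$, the target becomes $s = \frac{y-t}{p} - m$. Since $-m \in L$, a direct computation gives $\|s - (-m)\| = \left\|\frac{y-t}{p}\right\| = \frac{\|y-t\|}{p} < \frac{p\alpha\lambda_1(L)}{p} = \alpha\lambda_1(L)$, where I use the hypothesis $\|y-t\| < p\alpha\lambda_1(L)$ (note this is the only bound on $y$ and $t$ that the argument actually consumes). Thus $-m$ is a lattice point within the decoding radius of $s$, and by the uniqueness above $\bdda(L, s) = -m$.

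Finally I would close the loop using $y\bmod pL = y - pm$, and this is the step I expect to require the most care, since it asserts that the two reductions are compatible, namely $p\left(\frac{y}{p}\bmod L\right) = y \bmod pL$. With the standard convention that reduction into the fundamental parallelepiped uses a fixed basis matrix $B$ (so that $pL$ is reduced with basis $pB$), one checks componentwise that $x \bmod pL = x - pB\lfloor \tfrac{1}{p}B^{-1}x\rfloor = p\left(\tfrac{x}{p}\bmod L\right)$, which yields $y\bmod pL = p\left(\frac{y}{p} - m\right) = y - pm$. Substituting, $(y\bmod pL) - p\cdot\bdda(L,s) = (y - pm) - p(-m) = y$, as required. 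The main obstacle is therefore not any hard estimate but pinning down this reduction convention precisely enough that $p\cdot(\frac{y}{p}\bmod L)$ and $y\bmod pL$ coincide; everything else is the triangle inequality and the $\alpha<1/2$ uniqueness bound.
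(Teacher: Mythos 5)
Your proof is correct, but it is organized differently from the paper's. The paper first isolates two equivariance properties of the oracle --- shift-invariance $\bdda(L,t+x)=\bdda(L,t)+x$ for $x\in L$, and the scaling identity $\bdda(pL,pt)=p\cdot\bdda(L,t)$ --- and then derives Equation (\ref{eq:y_equiv}) by starting from the observation that $\bdda(pL,\,y-t)=0$ (since $0\in pL$ lies within the decoding radius $p\alpha\lambda_1(L)=\alpha\lambda_1(pL)$ of $y-t$) and unfolding via those two properties. You instead skip the intermediate properties entirely: you invoke the $\alpha<1/2$ uniqueness of the BDD answer once, exhibit $-m$ (where $\tfrac{y}{p}\bmod L=\tfrac{y}{p}-m$) as a valid answer to the query at distance $\|y-t\|/p<\alpha\lambda_1(L)$, and conclude $\bdda(L,s)=-m$ by direct computation. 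Both arguments rest on exactly the same two ingredients --- uniqueness of the decoded point when $\alpha<1/2$, and the compatibility $y\bmod pL=p\left(\tfrac{y}{p}\bmod L\right)$ --- but your version is more economical (one oracle evaluation computed outright rather than two equivariance lemmas chained together), and you make explicit the mod-compatibility convention that the paper uses silently in its step $(b)$; the paper only pins this down later, in the proof of Theorem \ref{Enum}, where $y\bmod pL$ is taken to be $\mathrm{B}s$ with $s\in\mathbb{Z}_p^n$. The paper's packaging has the mild advantage that the shift property (\ref{eq:shift}) is reused in the remark following the lemma to justify restricting the enumeration to coset representatives of $L/pL$, whereas your computation would have to be repeated there. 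Your side remark that the hypothesis $\dist(L,t)<\alpha\lambda_1(L)$ is never consumed is also accurate: only $\|y-t\|<p\alpha\lambda_1(L)$ is needed to guarantee the oracle query is well-posed.
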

\begin{proof}
{
An $\alpha$-$\BDD$ oracle $\bdda$ with $\alpha <0.5$ will have the following properties:
 \begin{align}
 \bdda(L,t+x)=&\bdda(L,t)+x \label{eq:shift}\\
 \bdda(pL, p t) =& p\cdot\bdda(L, t), \label{eq:scaling}
 \end{align}
 for any $x \in L$, and $t \in \mathbb{R}^n$ such that $\dist(L,t)<\alpha  \lambda_1(L)$.
First we prove (\ref{eq:shift}). For   $t \in \mathbb{R}^n$ satisfying $\dist(L,t)<\alpha  \lambda_1(L)<0.5 \lambda_1(L)$, $\bdda(L,t)$ returns a unique valid lattice point $v$ that satisfies $\|v-t\|=\dist(L,t)$. By definition, we know  $\dist(L,t)=\min\limits_{x\in L}\|x-t\|=\min\limits_{x'\in L}\|x'-t+a\|=\dist(L,t+a)<\alpha  \lambda_1(L)<0.5\lambda_1(L)$ for any $a \in L$. Then we know $\bdda(L,t+a)$ returns a unique valid lattice point $v'$, which satisfies $\|v'-t+a\|=\dist(L,t+a)=\dist(L,t)=\|v-t\|$,
 which implies Equation (\ref{eq:shift}).

As for (\ref{eq:scaling}), suppose $\dist(pL,pt)\leq  \alpha  \lambda_1(pL) < 0.5p\lambda_1(L) $.  Then $\bdda(pL,pt)$ returns a unique valid lattice point $w$. Also $\dist(pL,pt)=\min\limits_{x\in pL}\|x-pt\|=p\cdot\min\limits_{x'\in L}\|x'-t\|=p\cdot \dist(L,t)\leq p\alpha  \lambda_1(L)<0.5p\lambda_1(L) $. Therefore $\bdda( L,t)$ returns a unique valid output $w'\in L$ satisfying $p\cdot\dist(L,t)=\|pw'-pt\|=\dist(pL,pt)$, which implies Equation (\ref{eq:scaling}).

Now we prove  Equation~(\ref{eq:y_equiv}). Suppose $y \in \mathbb{R}^n$ and $\|y-t\|<p\alpha  \lambda_1(L)<0.5p\lambda_1(L) $.
By definition,
\begin{eqnarray*}
0 & = & \bdda(pL, y-t) \\
 & = & \bdda\left(pL, (y - (y \mod pL)) + (y \mod pL) - t \right)  \\
 &  \stackrel{(a)}{=} & \bdda(pL, (y \mod pL) - t) + y - (y \mod pL) \\
 &  \stackrel{(b)}{=} & p\cdot\bdda(L, (\frac{y}{p} \mod L) - \frac{t}{p}) + y - (y \mod pL),
 \end{eqnarray*}
 where $(a)$ is by (\ref{eq:shift}) and $(b)$ is by (\ref{eq:scaling}).

 Therefore, $y = (y \mod pL) - p\cdot\bdda(L, (\frac{y}{p} \mod L) - \frac{t}{p})$}.
\end{proof}

  {\bf{Remark:}} According to the above lemma, we can enumerate the lattice points within   distance $p\alpha  \lambda_1(L)$ to the origin by
checking the lattice points of the coset leaders in $L/pL$ with the help of an $\alpha$-BDD oracle, where $p\alpha\geq1$.
More precisely, if $y'=y+b$ for $b\in pL$, then $\bdda(pL,y'-t)=\bdda(pL,y-t)+b$ by (\ref{eq:shift}).
Also $\left|L/pL\right|=p^n$, so at most $p^n$ queries to the $\alpha$-BDD oracle are needed.

Therefore, we have Algorithm $\enum$   defined in Algorithm~\ref{EnumAlgorithm}.
\begin{theorem}[Algorithm \enum]\label{Enum}
Given a lattice $L \subset \mathbb{R}^n$ with basis matrix $\mathrm{B}$, a target vector $t\in\mathbb{R}^n$, an $\alpha$-$\BDD$ oracle $\bdda$ with $\alpha < 0.5$, and an integer scalar $p$ such that $p\alpha>1$,  $\enum$ defined in Algorithm \ref{EnumAlgorithm} collects all lattice points within distance $p\alpha \lambda_1(L)$ to $t$ (and some other lattice points).
\end{theorem}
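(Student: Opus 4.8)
The plan is to show that Algorithm~\ref{EnumAlgorithm} sweeps over all $p^n$ cosets of $L/pL$ and that the reconstruction identity of Lemma~\ref{lemma:y_equiv}, applied to each coset leader, recovers every lattice point $y$ with $\|y-t\|<p\alpha\lambda_1(L)$. First I would describe the loop structure: the algorithm iterates over a complete set of coset representatives $\{c_1,\dots,c_{p^n}\}$ of $L/pL$ (there are exactly $p^n$ of them since $|L/pL|=p^n$), and for each representative $c$ it forms the query vector $\left(\tfrac{c}{p}\bmod L\right)-\tfrac{t}{p}$, calls $\bdda$ once, and records the candidate
\[
y_c = c - p\cdot\bdda\left(L,\left(\tfrac{c}{p}\bmod L\right)-\tfrac{t}{p}\right)
\]
into the output list. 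This accounts for the $p^n$ queries advertised in the informal statement.

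The heart of the argument is the correctness direction. I would fix an arbitrary $y\in L$ with $\|y-t\|<p\alpha\lambda_1(L)$ and set $c=y\bmod pL$ to be its coset leader; since the algorithm ranges over all cosets, $c$ is one of the enumerated representatives. The key step is to verify that the oracle query the algorithm issues for $c$ coincides with the one appearing in~(\ref{eq:y_equiv}) for $y$. Because $y-c\in pL$, dividing by $p$ gives $(y-c)/p\in L$, hence $y/p\equiv c/p \pmod L$ and therefore $\left(\tfrac{y}{p}\bmod L\right)=\left(\tfrac{c}{p}\bmod L\right)$. The two queries are thus identical, and Lemma~\ref{lemma:y_equiv} yields
\[
y_c = c - p\cdot\bdda\left(L,\left(\tfrac{c}{p}\bmod L\right)-\tfrac{t}{p}\right) = (y\bmod pL) - p\cdot\bdda\left(L,\left(\tfrac{y}{p}\bmod L\right)-\tfrac{t}{p}\right) = y,
\]
so $y$ appears in the output list, as desired.

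The step I expect to be the main obstacle is justifying that Lemma~\ref{lemma:y_equiv} is genuinely applicable to this $y$, i.e. that every oracle call hidden inside~(\ref{eq:y_equiv}) is a legitimate BDD query lying within the valid decoding radius. This reduces to the lemma's hypotheses $\dist(L,t)<\alpha\lambda_1(L)$ and $\|y-t\|<p\alpha\lambda_1(L)$; the latter is exactly our assumption on $y$, while the former holds in the intended application (for SVP one takes $t=0$, so $\dist(L,t)=0$). I would then recall that the shifting and scaling identities~(\ref{eq:shift})--(\ref{eq:scaling}) used inside the lemma are invoked precisely because $\dist(pL,y-t)\le\|y-t\|<p\alpha\lambda_1(L)=\alpha\lambda_1(pL)$, which (using $\alpha<0.5$) keeps each invocation inside the unique-decoding regime and forces $\bdda(pL,y-t)=0$. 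Finally I would remark that a coset leader $c$ not arising from any valid $y$ still produces some lattice point $y_c$ under the same formula; these spurious outputs are harmless and account for the parenthetical ``and some other lattice points'' in the statement, since the enumeration only needs to be a superset of the ball of radius $p\alpha\lambda_1(L)$ about $t$.
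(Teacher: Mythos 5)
Your proposal is correct and follows essentially the same route as the paper: fix a lattice point $y$ with $\|y-t\|<p\alpha\lambda_1(L)$, apply Lemma~\ref{lemma:y_equiv}, and observe that $y \bmod pL$ coincides with one of the $p^n$ coset representatives $\mathrm{B}s$, $s\in\mathbb{Z}_p^n$, that the algorithm enumerates, so $y$ appears in the output list. You add slightly more care than the paper does (checking that the query issued for the coset leader $c$ equals the one in~(\ref{eq:y_equiv}) via $y/p\equiv c/p\pmod L$, and verifying the oracle calls stay in the unique-decoding regime), but the underlying argument is the same.
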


\begin{proof}
{
It suffices to show that for any lattice point $y \in L$ satisfying $\|y-t\|<p\alpha\lambda_1(L)$, there exists $s\in\mathbb{Z}_p^n$ such that $y=-p\cdot\bdda(L,\frac{\mathrm{B}s-t}{p})+\mathrm{B}s$.
Suppose  $y \in L$ such that $\|y-t\|<p\alpha\lambda_1(L)$. By Lemma~\ref{lemma:y_equiv}, we have
\[
y = (y \mod pL) - p\cdot\bdda(L, (\frac{y}{p} \mod L) - \frac{t}{p}).
\]
Since  $(y \mod pL)$ can be represented by $\mathrm{B}s$ for some $s\in \mathbb{Z}^n_p$, we can rewrite the above equation  as
\[
y = -p\cdot\bdda(L,\frac{\mathrm{B}s-t}{p})+\mathrm{B}s,
\]
and there we can using points in $L/pL$ to compute all lattice points close enough to $t$}
\end{proof}

Note that the list generated in Theorem~\ref{Enum} may contain lattice points whose distance to $t$ is greater than $p\alpha \lambda_1(L)$.

\begin{algorithm}[h]

\SetAlgoLined

\SetKwData{Left}{left}\SetKwData{This}{this}\SetKwData{Up}{up}

\SetKwFunction{Union}{Union}\SetKwFunction{FindCompress}{FindCompress}

\SetKwInOut{Input}{input}\SetKwInOut{Output}{output}

\Input{lattice $L$, basis matrix $\mathrm{B}$, target vector $t$, scalar $p$, BDD oracle $\bdda$}

\Output{A set of vectors $OUTPUT$ that contains the lattice points within  distance $p\alpha\lambda_1(L)$ to  $t$}

Initialize: OUTPUT=null array\;

\For{all $s \in Z^n_p$}{

OUTPUT[$s$]=$-p\cdot\bdda(L,\frac{\mathrm{B}s-t}{p})+\mathrm{B}s$;

}
\Return OUTPUT;
\caption{The enumeration algorithm   $\enum(L(\textbf{B})$, $t$, $p$, $\bdda$)}

\label{EnumAlgorithm}

\end{algorithm}

\subsection{Constructing a 0.391-BDD solver} \label{sec:BDDP}

In this subsection, we will construct an $\alpha$-BDD solver with preprocessing for some $1/3\leq \alpha< 1/2$.
Recall that the algorithm $\enum$ makes $\lceil\frac{1}{\alpha}\rceil^n$ queries to an $\alpha$-$\BDD$ oracle $\bdda$
if $p$ is chosen to be $\lceil\frac{1}{\alpha}\rceil$.
Then  using the algorithm $\enum$ to solve SVP will take time $O(B_\alpha\cdot \lceil\frac{1}{\alpha}\rceil^n)$,
where   $B_\alpha$ is the running time of $\bdda$.

Now we determine how small alpha can be. Aggarwal \emph{et al.} have the following reduction from $\alpha$-$\BDD$ to DGS:

\begin{theorem}\label{CVPtoDGS}\cite[Theorem 7.3]{DBLP_journals_corr_AggarwalDS15}
For any $\epsilon \in (0,1/200)$, let
\[
\alpha(L)\equiv \frac{\sqrt{\log(1/\epsilon)/\pi-o(1)}}{2\eta_\epsilon(L^*)\lambda_1(L)}.
\]
Then there exists a reduction from $\alpha$-$\BDD$ to $\frac{1}{2}$-DGS$^m_{\eta_\epsilon(L^*)}$, where $m=O\left(n\log(1/\epsilon)/\sqrt{\epsilon}\right)$. The reduction, which preserves the dimension, makes a single query to the DGS oracle, and runs in time $m\cdot \text{poly}(n)$.
\end{theorem}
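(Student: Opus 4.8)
The plan is to instantiate the reduction as the periodic-Gaussian $\BDD$ solver of Algorithm~\ref{BDD_Oracle}, supplying it with dual discrete-Gaussian samples drawn by a single call to the $\frac12$-DGS oracle, and then to read off correctness directly from Theorem~\ref{gradientascent}. The entire content of the reduction is therefore a matter of (a) turning the DGS oracle into the sample list $W$ that the periodic Gaussian $f_W$ needs, and (b) normalizing $L$ so that the promised $\BDD$ radius $\alpha(L)\lambda_1(L)$ lands exactly inside the region in which gradient ascent provably converges.

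First I would compute the dual basis $D=(B^T)^{-1}$ in $\poly(n)$ time and make one query to the $\frac12$-DGS oracle on the dual lattice $L^*$ with Gaussian width $s$ just above the smoothing parameter $\eta_\epsilon(L^*)$ recorded in the subscript, obtaining $m=O(n\log(1/\epsilon)/\sqrt\epsilon)$ vectors $W=(w_1,\dots,w_m)$ whose joint law is $\tfrac12$-close to $D_{L^*,s}^{\otimes m}$. From these I form $f_W(t)=\frac1m\sum_{i=1}^m\cos(2\pi\langle w_i,t\rangle)$. By the Poisson-summation identity $f_L(t)=\mathbb{E}_{w\sim D_{L^*}}[\cos(2\pi\langle w,t\rangle)]$, $f_W$ is an empirical estimator of the periodic Gaussian $f_L$, and since $m$ matches the requirement $N=\Omega(n\log(1/\epsilon)/\sqrt\epsilon)$ of Theorem~\ref{gradientascent}, $f_W$ approximates $f_L$ well with probability $1-2^{-\Omega(n)}$. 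I would then run the two gradient-ascent steps of Algorithm~\ref{BDD_Oracle} on the target $t$ followed by the Babai-style rounding $B\cdot\mathrm{Round}(B^{-1}t)$. The only wrinkle is that the DGS samples are merely $\tfrac12$-close rather than exact: here I would use the coupling characterization of total variation distance, so that with probability at least $\tfrac12$ the list $W$ can be coupled to genuine i.i.d.\ $D_{L^*,s}$ samples, on which Theorem~\ref{gradientascent} applies verbatim; the reduction thus succeeds with constant probability, which is all a reduction requires (and is amplifiable by repetition).

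The crux, and the step I expect to be most delicate, is the parameter matching. Rescaling $L$ so that $\rho(L)=1+\epsilon$, equivalently $\eta_\epsilon(L^*)=1$, puts us exactly in the hypothesis of Theorem~\ref{gradientascent}, which guarantees that one ascent step strictly contracts $\|t-v\|$ (with $v$ the nearest lattice point) whenever $\|t-v\|\le\delta_{max}s_\epsilon$, where $\delta_{max}=\frac12-\frac{2}{\pi s_\epsilon^2}$ and $s_\epsilon=\sepsilondef$. Following Dadush \emph{et al.}, two such steps push the iterate into the Babai-decodable region, so the final rounding returns $v$ exactly. Undoing the normalization, the admissible radius becomes $\delta_{max}s_\epsilon/\eta_\epsilon(L^*)=\left(\tfrac12 s_\epsilon-\tfrac{2}{\pi s_\epsilon}\right)/\eta_\epsilon(L^*)$, and the challenge is to verify that the $\frac{2}{\pi s_\epsilon}$ correction together with the $O(1)$ additive constants inside the logarithm of $s_\epsilon$ are exactly what is absorbed by the lower-order term, yielding the stated $\alpha(L)\lambda_1(L)=\frac{\sqrt{\log(1/\epsilon)/\pi-o(1)}}{2\eta_\epsilon(L^*)}$. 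Once this identification is made, the remaining accounting is routine: each ascent step evaluates $f_W$ and $\nabla f_W$ as sums of $m$ trigonometric terms at cost $\poly(n)$ each, there are $O(1)$ ascent steps plus a $\poly(n)$ rounding, so the running time is $m\cdot\poly(n)$, the transformation clearly preserves the dimension, and exactly one DGS query is used.
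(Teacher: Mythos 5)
You should note at the outset that the paper never proves Theorem~\ref{CVPtoDGS}: it is imported verbatim from \cite[Theorem 7.3]{DBLP_journals_corr_AggarwalDS15}, so the only thing to measure your proposal against is the construction the paper sketches around it, namely Algorithm~\ref{BDD_Oracle} and Theorem~\ref{gradientascent}. Your skeleton matches that construction exactly: one DGS query on the dual to populate $W$, the empirical periodic Gaussian $f_W$, gradient ascent, Babai-style rounding, a coupling argument to absorb the $\tfrac12$ statistical error, and the observation that $m=O(n\log(1/\epsilon)/\sqrt{\epsilon})$ is precisely the sample count Theorem~\ref{gradientascent} demands. The dimension preservation, single-query property, and $m\cdot\poly(n)$ running time all come out correctly.

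The genuine gap is the step you yourself identify as the crux and then leave as ``the challenge is to verify'': the parameter matching. After normalizing $\eta_\epsilon(L^*)=1$ (equivalently $\rho(L)=1+\epsilon$), the region on which Theorem~\ref{gradientascent} guarantees contraction is $\|t\|\le\delta_{max}s_\epsilon=\tfrac{s_\epsilon}{2}-\tfrac{2}{\pi s_\epsilon}$, and squaring $2\delta_{max}s_\epsilon$ gives $\tfrac{\log(1/\epsilon)}{\pi}+\tfrac{\log(2(1+\epsilon))-8}{\pi}+\tfrac{16}{\pi^2 s_\epsilon^2}$, i.e.\ a fixed additive \emph{constant} (roughly $-1.5$) below $\log(1/\epsilon)/\pi$, not an $o(1)$ deficit. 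So the route you set up does not by itself yield the stated $\alpha(L)\lambda_1(L)=\tfrac12\sqrt{\log(1/\epsilon)/\pi-o(1)}$; closing it requires either the sharper analysis of the ascent region in the source or an explicit argument that the $-o(1)$ in the theorem is meant to absorb this constant (which, as written, it is not). A secondary point taken on faith is termination: the claim that two ascent iterations followed by $B\cdot\mathrm{Round}(B^{-1}t)$ recover the lattice point exactly for \emph{every} $\epsilon\in(0,1/200)$ is nontrivial, since at the boundary of the admissible region the per-step contraction factor $\epsilon^{(1-2\zeta(t))/4}$ is only a constant for constant $\epsilon$, and naive rounding in an arbitrary basis can require the iterate to be far closer than two constant-factor contractions achieve; you inherit this from the paper's Algorithm~\ref{BDD_Oracle}, but a self-contained proof would have to address it. Everything else in your write-up is sound.
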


By Theorem~\ref{CVPtoDGS}, we know the quality of DGS will determine how many arithmetic operations we need to compute the periodic Gaussian function and   how large $\alpha(L)$ we have.  
Aggarwal \emph{et al.} proposed a construction of DGS in  \cite{DBLP_journals_corr_AggarwalDS15}.



\begin{theorem}\label{hDGS}\cite[Theorem 5.11]{DBLP_journals_corr_AggarwalDS15}
For a lattice $L \subset \mathbb{R}^n$, let $\sigma(L)=\sqrt{2}\cdot \eta_{1/2}(L)$. Then there exists an algorithm that solves $\exp(-\Omega(\kappa))$-DGS$^{2^{n/2}}_\sigma$ in time $2^{\frac{n}{2}+\text{polylog}(\kappa)+o(n)}$ with space $O(2^{n/2})$ for any $\kappa\geq \Omega(n)$.
\end{theorem}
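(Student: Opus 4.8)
The plan is to produce the $2^{n/2}$ samples by an \emph{iterated combination} scheme: first generate enough samples from a very wide discrete Gaussian $D_{L,s_0}$, which is cheap, and then repeatedly halve the width parameter until it reaches the target $s$. The engine of the narrowing step is the averaging identity for discrete Gaussians: if $X_1,X_2$ are independent and distributed as $D_{L,s'}$, then, conditioned on $X_1\equiv X_2 \pmod{2L}$, the vector $(X_1+X_2)/2$ lies in $L$ and is distributed as $D_{L,s'/\sqrt{2}}$. I would prove this directly from the parallelogram law $\|a\|^2+\|b\|^2=2\|\tfrac{a+b}{2}\|^2+2\|\tfrac{a-b}{2}\|^2$, which factors $\rho_{s'}(a)\rho_{s'}(b)=\rho_{s'/\sqrt2}(\tfrac{a+b}{2})\,\rho_{s'/\sqrt2}(\tfrac{a-b}{2})$; under the conditioning both $(a+b)/2$ and $(a-b)/2$ range over $L$, so summing out the free coordinate $(a-b)/2\in L$ leaves exactly $\rho_{s'/\sqrt2}$ on the average $(a+b)/2$. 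Hence each combination round divides $s'$ by $\sqrt2$, and $O(\log(s_0/s))$ rounds suffice to descend from $s_0$ to $s$.

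Next I would control the two distributions that make this work quantitatively. For the initial broad sampling I would invoke a Klein/GPV-type sampler on an $\mathsf{LLL}$-reduced basis to draw the required number of vectors from $D_{L,s_0}$ in $\mathrm{poly}(n)$ time each, choosing $s_0$ large enough (single exponential above $\eta_{1/2}(L)$) that the samples are within the desired statistical distance of the ideal Gaussian. The hypothesis $s>\sigma(L)=\sqrt2\,\eta_{1/2}(L)$ is then used to keep every intermediate width above the smoothing parameter: above smoothing the pushforward of $D_{L,s'}$ to $L/2L$ is within $\epsilon$ of uniform, so the $2^n$ cosets are visited almost evenly and the combination step loses only a controlled fraction of its samples per round. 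The accuracy parameter $\kappa$ enters through the number of bits carried in the $\rho$- and inner-product computations and through Chernoff bounds on coset occupancy; taking these to $\mathrm{poly}(\kappa)$ precision makes the per-round error $\exp(-\Omega(\kappa))$, and since the number of combination rounds is $O(\log(s_0/s))=O(\mathrm{poly}(n))$, the errors add up to a total statistical distance that is still $\exp(-\Omega(\kappa))$.

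The crux — and the step I expect to be hardest — is achieving the $O(2^{n/2})$ space and $2^{n/2+\mathrm{polylog}(\kappa)+o(n)}$ time bound, because the naive combiner needs roughly $2^n$ stored samples just to guarantee that most of the $2^n$ cosets of $L/2L$ contain a pair to average. To get the square-root bound I would avoid matching full $2L$-cosets in one shot and instead combine along a tower $2L=L_n\subseteq\cdots\subseteq L_0=L$ of index-$2$ sublattices (equivalently, run the combination recursively), so that at each micro-step only a bounded number of coset classes must be matched and a fixed fraction of a list of size $2^{n/2}$ can be paired; the above-smoothing near-uniformity is exactly what guarantees these matches persist at density $2^{n/2}$ rather than requiring density $2^n$. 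Maintaining the sample count at $\Theta(2^{n/2})$ across rounds (replenishing the losses without ever exceeding the space budget) and verifying that the recursively combined output is still $\exp(-\Omega(\kappa))$-close to $D_{L,s}$ is the main technical obstacle; the time bound then follows by charging $2^{n/2}\cdot\mathrm{poly}(n)$ work to each of the $O(\mathrm{poly}(n))$ rounds, together with the $2^{\mathrm{polylog}(\kappa)}$ arithmetic overhead needed for the $\exp(-\Omega(\kappa))$-precision $\rho$-computations, giving $2^{n/2+\mathrm{polylog}(\kappa)+o(n)}$ overall.
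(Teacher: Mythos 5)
First, a point of reference: the paper gives no proof of this statement. Theorem~\ref{hDGS} is imported verbatim as Theorem 5.11 of the cited work of Aggarwal, Dadush and Stephens-Davidowitz and is used as a black box, so there is no internal argument to compare yours against; what you have written is an attempted reconstruction of the cited paper's proof. Your first two paragraphs do match the architecture of that construction: the factorization $\rho_{s'}(a)\rho_{s'}(b)=\rho_{s'/\sqrt2}\bigl(\tfrac{a+b}{2}\bigr)\rho_{s'/\sqrt2}\bigl(\tfrac{a-b}{2}\bigr)$ via the parallelogram law, the bijection $(a,b)\mapsto\bigl(\tfrac{a+b}{2},\tfrac{a-b}{2}\bigr)$ on pairs with $a\equiv b\pmod{2L}$, the wide initial Gaussian from a Klein-type sampler on an LLL-reduced basis, and the accumulation of $\exp(-\Omega(\kappa))$ errors over polynomially many halving rounds are all correct, and they are exactly the ingredients of the \emph{general} discrete Gaussian combiner.

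The gap sits in your third paragraph, which is the only place the exponent $n/2$ could come from, and the mechanism you propose there does not work as stated. Two concrete problems. (i) If you match pairs modulo an index-$2$ sublattice $L_1$ with $2L\subsetneq L_1\subsetneq L$, the condition $X_1-X_2\in L_1$ only forces $X_1+X_2\in L_1$, so $\tfrac{X_1+X_2}{2}$ lands in $\tfrac12L_1$, a proper superlattice of $L$; the ``free'' coordinate $\tfrac{a-b}{2}$ no longer ranges over a set independent of the average, and summing it out does not leave $D_{L,s'/\sqrt2}$. The clean combiner genuinely requires matching modulo $2L$, i.e.\ across all $2^n$ cosets. (ii) The claim that near-uniformity of the coset distribution lets a list of size $2^{n/2}$ sustain itself is backwards: if the $2^n$ cosets of $L/2L$ are hit nearly uniformly, the collision probability $\sum_{c\in L/2L}p_c^2=\rho_{s'/\sqrt2}(L)^2/\rho_{s'}(L)^2$ is $\Theta(2^{-n})$, so $N$ stored samples yield only about $N^2 2^{-n}$ matched pairs per round, and maintaining the list forces $N\approx 2^n$. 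This is precisely why the generic combiner costs $2^{n+o(n)}$ time and space, and why the above-smoothing $2^{n/2}$ bound in the cited paper rests on a different primitive (their ``square sampler''), which uses the hypothesis $s\ge\sqrt2\,\eta_{1/2}(L)$ in a more delicate way than coset near-uniformity. As written, your sketch establishes the $2^{n+o(n)}$-time, $2^{n}$-space version of the theorem but not the stated $2^{n/2}$ bounds, which are the whole point of the statement for this paper's space-efficiency claims.
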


Basically, their idea is to sample $O(2^{n/2})$ vectors from $D_{L,\sigma}$.
We will use their DGS solver with  
 $s=\sigma+o(1)$ in our algorithm and the above theorem states that we can prepare $O(2^{n/2})$ vectors from $D_{L,\sigma}$ in time $O(2^{n/2})$ when $s>\sqrt{2}\cdot \eta_{1/2}(L)$.

Suppose we have $O(2^{n/2})$ discrete Gaussian samples with   standard deviation greater than or equal to $\sqrt{2}\cdot \eta_{1/2}(L^*)$ for any $L^* \subset \mathbb{R}^n$.
  Now we want to construct a $(1/3)$-$\BDD$ solver (or above) to solve SVP by using Theorem \ref{CVPtoDGS},
  so we need to give a lower bound for $\alpha(L)$ of a BDD oracle on condition that the standard deviation of discrete Gaussian samples is $\sqrt{2}\cdot \eta_{1/2}(L)$.
   The following lemma   provides a relation between the smoothing parameter $\eta_\epsilon(L^*)$ and  $\lambda_1(L)$:

\begin{lemma}\label{smoothinglambda}\cite[Lemma 6.1]{DBLP_journals_corr_AggarwalDS15}
For any lattice $L \subset \mathbb{R}^n$ and $\epsilon \in (0,1)$, if $\epsilon > (e/\beta^2+o(1))^{-\frac{n}{2}}$, where $\beta= 2^{0.401}$, we have
 \begin{align}
\sqrt{\frac{\log(1/\epsilon)}{\pi}} < \lambda_1(L) \eta_\epsilon(L^*) < \sqrt{\frac{\beta^2n}{2\pi e}}\cdot \epsilon^{-1/n}\cdot(1+o(1)), \label{ineq:smoothingtosvlarge}
 \end{align}
and if $\epsilon \leq (e/\beta^2+o(1))^{-\frac{n}{2}}$, we have
 \begin{align}
\sqrt{\frac{\log(1/\epsilon)}{\pi}} < \lambda_1(L) \eta_\epsilon(L^*) < \sqrt{\frac{\log(1/\epsilon)+n\log\beta+o(n)}{\pi}}. \label{ineq:smoothingtosvsmall}
 \end{align}

\end{lemma}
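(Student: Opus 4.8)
The plan is to work directly from the defining relation of the smoothing parameter. Writing $\eta \equiv \eta_\epsilon(L^*)$ and using $(L^*)^*=L$, the definition says $\eta$ is the unique value with $\rho_{1/\eta}(L\setminus\{0\})=\epsilon$, i.e.
\[
\sum_{x\in L\setminus\{0\}} e^{-\pi\eta^2\|x\|^2} = \epsilon .
\]
All three displayed inequalities are then assertions about how small and how large the product $\lambda_1(L)\eta$ can be, subject to this theta-type sum equalling $\epsilon$. The common lower bound is immediate: the sum contains the two terms from a shortest vector $\pm v$ with $\|v\|=\lambda_1(L)$, so $\epsilon \ge 2e^{-\pi\eta^2\lambda_1(L)^2} > e^{-\pi\eta^2\lambda_1(L)^2}$, and taking logarithms yields $\lambda_1(L)\eta > \sqrt{\log(1/\epsilon)/\pi}$ in both regimes.

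For the two upper bounds I would first establish a quantitative bound on the lattice-point counting function $N(R)=|L\cap \mathrm{Ball}(0,R)|$. The key input is the best known sphere-packing density bound of Kabatiansky--Levenshtein, $\Delta_n \le 2^{-0.599\,n(1+o(1))}$. Since the balls of radius $\lambda_1(L)/2$ centered at lattice points pack with density at most $\Delta_n$, one gets $\det L \ge V_n(\lambda_1(L)/\beta)^n(1+o(1))^{-1}$, where $V_n$ is the volume of the unit ball and $\beta = 2^{0.401}=2/2^{0.599}$. Converting this to a point count (paying a $(1+o(1))$ factor for the boundary shell) gives $N(R)\le (\beta R/\lambda_1(L))^n(1+o(1))$.

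Next I would rewrite the sum as a Stieltjes integral against $N$, integrate by parts, and bound it with the counting estimate to obtain
\[
\epsilon \le (1+o(1))\,\frac{\beta^n}{\lambda_1(L)^n}\int_0^\infty 2\pi\eta^2 R^{\,n+1}e^{-\pi\eta^2R^2}\,dR = (1+o(1))\,\beta^n\bigl(\pi\eta^2\lambda_1(L)^2\bigr)^{-n/2}\Gamma(n/2+1).
\]
Solving for $\lambda_1(L)\eta$ and applying Stirling in the form $\Gamma(n/2+1)^{2/n}=(n/2e)(1+o(1))$ produces the large-$\epsilon$ bound $\lambda_1(L)\eta < \sqrt{\beta^2 n/(2\pi e)}\,\epsilon^{-1/n}(1+o(1))$, inequality (\ref{ineq:smoothingtosvlarge}). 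The regime split comes from the saddle of the integrand $R^{n+1}e^{-\pi\eta^2R^2}$, which peaks at $R^*=\sqrt{n/(2\pi\eta^2)}$: when $\lambda_1(L)\eta<\sqrt{n/(2\pi)}$ (so $R^*>\lambda_1(L)$) the bulk dominates and this estimate is tight, whereas when $\lambda_1(L)\eta>\sqrt{n/(2\pi)}$ the integral is dominated by the innermost shell $R\approx\lambda_1(L)$, whose count is at most the kissing number $\le\beta^n(1+o(1))$, giving $\epsilon\le \beta^n e^{-\pi\eta^2\lambda_1(L)^2}(1+o(1))$ up to tail corrections and hence (\ref{ineq:smoothingtosvsmall}). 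A short computation shows the two upper bounds coincide, and the saddle crosses $\lambda_1(L)$, exactly when $\epsilon=(e/\beta^2)^{-n/2}$, which pins down the threshold stated in the lemma.

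I expect the main obstacle to be making the counting estimate quantitative with the correct base $\beta$ while controlling the $(1+o(1))$ and $o(n)$ losses consistently throughout: both in converting the packing density bound into a pointwise count $N(R)$ (the boundary-shell effect) and, in the small-$\epsilon$ regime, in bounding the tail of the theta sum so that only the kissing-number shell survives. Matching the precise threshold $(e/\beta^2+o(1))^{-n/2}$ requires tracking these error terms carefully through the Laplace/saddle-point estimate rather than merely identifying the dominant exponential.
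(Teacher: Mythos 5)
This lemma is not proved in the paper at all: it is imported verbatim as \cite[Lemma 6.1]{DBLP_journals_corr_AggarwalDS15}, so the only comparison available is with the proof in that source. Your reconstruction matches its architecture: the common lower bound comes from the two terms $\pm v$, $\|v\|=\lambda_1(L)$, in the theta sum $\rho_{1/\eta}(L\setminus\{0\})=\epsilon$; the upper bounds come from feeding a Kabatiansky--Levenshtein-based point count into that sum, with the two regimes governed by whether the Laplace saddle of the resulting integral sits inside or outside the first shell. Your consistency checks are also right: the two upper bounds, and the saddle condition $\pi\eta^2\lambda_1(L)^2=n/2$, all meet exactly at $\epsilon=(e/\beta^2+o(1))^{-n/2}$.

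The one step I would not accept as written is the derivation of $N(R)\le(\beta R/\lambda_1(L))^n(1+o(1))$ from the packing-density bound. Disjointness of the balls of radius $\lambda_1(L)/2$ gives a count proportional to $(R+\lambda_1(L)/2)^n$, and the ratio $(1+\lambda_1(L)/(2R))^n$ is a constant strictly greater than $1$ raised to the $n$, not $e^{o(n)}$, precisely in the range $R=\Theta(\lambda_1(L))$ where the saddle lives in the large-$\epsilon$ regime (for constant $\epsilon$ one finds $R^*\approx\sqrt{e}\,\lambda_1(L)/\beta\approx 1.25\,\lambda_1(L)$). Any such per-dimension constant loss in the bound on $\rho_{1/\eta}(L\setminus\{0\})$ propagates to a constant-factor (not $1+o(1)$) loss in $\lambda_1(L)\eta_\epsilon(L^*)$, which would destroy the stated constant $\sqrt{\beta^2 n/(2\pi e)}$ in (\ref{ineq:smoothingtosvlarge}). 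You flag this yourself as the main obstacle, and it genuinely is: in the cited source the counting estimate is obtained by decomposing the ball into thin spherical shells and applying the Kabatiansky--Levenshtein spherical-code bound to each shell directly, rather than by converting a global packing-density bound into a local count. With that substitution your argument goes through; without it, the upper bound in the first regime is only established up to an unspecified constant factor in front of $\epsilon^{-1/n}$.
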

 We know that we can have an $\alpha$-BDD oracle from Algorithm~\ref{BDD_Oracle}
 with $\alpha$ as large as $0.5-o(1)$.
This is because the largest decoding distance is, by Theorem~\ref{CVPtoDGS},  $(\sqrt{\log(1/\epsilon)/\pi-o(1)})/2\eta_\epsilon(L^*),$
which is less than $\lambda_1/2$ by Lemma~\ref{smoothinglambda}.
Hence $p\geq 3$ and at least $3^n$  queries to the $\BDD$ oracle are necessary to solve SVP.

Furthermore, the smaller $\alpha$ of BDD is, the fewer arithmetic operations we need. Therefore we want to make $\alpha$ close to $1/3$.
We will use Lemma~\ref{smoothinglambda} to give a lower bound for the parameter $\alpha$ in Theorem~\ref{CVPtoDGS}.
The other parameter of our concern is $\eta_{\epsilon}(L^*)$ because in Theorem~\ref{CVPtoDGS} one can solve $\alpha(L)$-BDD with  preprocessed discrete Gaussian samples from $D_{L^*,\eta_{\epsilon}(L^*)}$.
To apply Theorem~\ref{hDGS} for DGS, we choose $s=\eta_{\epsilon}(L^*)$, which satisfies $\eta_{\epsilon}(L^*)>\sqrt{2} \eta_{\frac{1}{2}}(L^*)$.
Then we obtain the the following corollary.

\begin{corollary}\label{smoothingDGS}
Let $\epsilon= e^{-(\beta^2/e+o(1))n}$, where $\beta= 2^{0.401}$. Then there exists an algorithm that solves $exp(-\Omega(\kappa))$-DGS$^{2^\frac{n}{2}}_{\eta_\epsilon(L^*)}$ in time $2^{\frac{n}{2}+polylog(\kappa)+o(n)}$ for any $\kappa\geq \Omega(n)$.
\end{corollary}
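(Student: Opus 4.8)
The plan is to derive the corollary as a direct instantiation of Theorem~\ref{hDGS}. That theorem already supplies an algorithm for $\exp(-\Omega(\kappa))$-DGS$^{2^{n/2}}_\sigma$ at any parameter exceeding $\sigma(L)=\sqrt2\,\eta_{1/2}(L)$, with exactly the claimed time $2^{n/2+\mathrm{polylog}(\kappa)+o(n)}$ and space $O(2^{n/2})$. Applying it to the lattice $L^*$ (relabelling $L\to L^*$, which is again a lattice in $\mathbb{R}^n$), the only thing I need to check is that the target parameter $s=\eta_\epsilon(L^*)$ lies above the admissible threshold for $L^*$, namely that
\[
\eta_\epsilon(L^*) > \sqrt2\,\eta_{1/2}(L^*)=\sigma(L^*).
\]
Once this is established, sampling at $s=\eta_\epsilon(L^*)>\sigma(L^*)$ is a legal instance handled by Theorem~\ref{hDGS}, and the running time, sample count $2^{n/2}$, and accuracy $\exp(-\Omega(\kappa))$ carry over verbatim, proving the corollary.

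So the entire content reduces to the displayed inequality, which I would obtain from Lemma~\ref{smoothinglambda} applied at two precisions, $\epsilon$ and $1/2$, followed by division. First I would classify each value into the correct branch of the lemma. For the constant precision $1/2$ we have $1/2 > (e/\beta^2+o(1))^{-n/2}$ for all large $n$, so the first branch applies and gives the upper bound
\[
\lambda_1(L)\,\eta_{1/2}(L^*) < \sqrt{\tfrac{\beta^2 n}{2\pi e}}\,(1+o(1)),
\]
using $\epsilon^{-1/n}=2^{1/n}=1+o(1)$. For $\epsilon=e^{-(\beta^2/e+o(1))n}$ I would verify that $\epsilon \le (e/\beta^2+o(1))^{-n/2}$ (the per-$n$ decay rates satisfy $\beta^2/e\approx0.64 > \tfrac12\log(e/\beta^2)\approx0.22$), so the second branch applies and gives the lower bound $\lambda_1(L)\,\eta_\epsilon(L^*) > \sqrt{\log(1/\epsilon)/\pi}$. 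Note these are the favorable directions: a lower bound on $\eta_\epsilon$ and an upper bound on $\eta_{1/2}$.

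Dividing the two inequalities cancels the common factor $\lambda_1(L)$ and yields
\[
\frac{\eta_\epsilon(L^*)}{\eta_{1/2}(L^*)} > \frac{\sqrt{\log(1/\epsilon)/\pi}}{\sqrt{\beta^2 n/(2\pi e)}\,(1+o(1))}.
\]
Substituting $\log(1/\epsilon)=(\beta^2/e+o(1))n$, the square of the right-hand bound simplifies, since $\dfrac{\log(1/\epsilon)/\pi}{\beta^2 n/(2\pi e)}=\dfrac{2e\log(1/\epsilon)}{\beta^2 n}=2+o(1)$, so the ratio of smoothing parameters exceeds $\sqrt2\,(1-o(1))$. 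The delicate point—and the step I expect to be the main obstacle—is precisely this calibration: the exponent $\beta^2/e$ in the definition of $\epsilon$ is chosen so that the ratio lands exactly on $\sqrt2$, making the desired inequality $\eta_\epsilon(L^*)>\sqrt2\,\eta_{1/2}(L^*)$ a borderline statement where the $(1+o(1))$ slack from Lemma~\ref{smoothinglambda} must be controlled. I would resolve it by exploiting the $o(1)$ freedom built into the definition of $\epsilon$: choosing that $o(1)$ to be a suitable positive vanishing quantity makes $\log(1/\epsilon)$ slightly exceed $(\beta^2/e)n$, pushing the ratio strictly above $\sqrt2$ and dominating the vanishing slack. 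With the threshold inequality secured, the conclusion follows from Theorem~\ref{hDGS} applied to $L^*$ with $s=\eta_\epsilon(L^*)$, exactly as stated.
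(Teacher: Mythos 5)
Your proposal follows essentially the same route as the paper's own proof: apply the upper bound of Lemma~\ref{smoothinglambda} at precision $1/2$ and the lower bound at precision $\epsilon$, divide to cancel $\lambda_1(L)$, conclude $\eta_\epsilon(L^*)>\sqrt{2}\,\eta_{1/2}(L^*)$, and invoke Theorem~\ref{hDGS} on $L^*$. If anything you are more explicit than the paper about the borderline calibration at exactly $\sqrt{2}$ and how the $o(1)$ freedom in the definition of $\epsilon$ absorbs the slack, a point the paper's proof passes over silently.
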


\begin{proof}
{
Let $\epsilon'= e^{-(\beta^2/e+o(1))n}$. We now try to prove $\eta_{\epsilon'}(L^*)>\sqrt{2} \eta_{\frac{1}{2}}(L^*)$.
First when $\epsilon=1/2>(e/\beta^2+o(1))^{-\frac{n}{2}}$, by the right inequality of (\ref{ineq:smoothingtosvlarge}) we have:
\[
\sqrt{2}\eta_{1/2}(L^*)<\frac{1}{\lambda_1(L)} \cdot \sqrt{\frac{\beta^2n}{\pi e}}\cdot \epsilon^{-1/n}\cdot(1+o(1)).
\]
As for $\epsilon= \epsilon'<(e/\beta^2+o(1))^{-\frac{n}{2}}$, we use the left inequality of (\ref{ineq:smoothingtosvsmall})  to have:
\[
\eta_\epsilon'(L^*) > \sqrt{\frac{\log(1/\epsilon')}{\pi}} \cdot \frac{1}{\lambda_1(L)} =\sqrt{\frac{(\beta^2/e+o(1))n}{\pi}}\cdot \frac{1}{\lambda_1(L)}.
\]
Hence we know $\eta_\epsilon(L^*) > \sqrt{\frac{(\beta^2/e+o(1)))n}{\pi}}\cdot \frac{1}{\lambda_1(L)} > \sqrt{2}\eta_{1/2}(L^*)$. Then by Theorem $\ref{hDGS}$ we complete the proof.}
\end{proof}
Therefore, combining corollary~\ref{smoothingDGS} and theorem~\ref{CVPtoDGS}, we derive the following result.

\begin{corollary}\label{391bdd}
Let $\beta = 2^{0.401}$ and $\epsilon= e^{-(\beta^2/e+o(1))n}$. There exists an algorithm that solves $\alpha$-BDD in time $O(e^{(\beta^2/2e+o(1))n}+2^{0.5n+o(n)})=O(2^{0.5n+o(n)})$ and in space $O(2^{0.5n})$ for $\alpha = 0.391$.
\end{corollary}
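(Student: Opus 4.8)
The plan is to chain the two results already in hand: Corollary~\ref{smoothingDGS}, which manufactures the discrete Gaussian samples, and Theorem~\ref{CVPtoDGS}, which converts such samples into a BDD solver. First I would fix $\beta = 2^{0.401}$ and $\epsilon = e^{-(\beta^2/e+o(1))n}$ as in the statement, observing that $\epsilon \in (0,1/200)$ for all large $n$ so that Theorem~\ref{CVPtoDGS} is applicable. Corollary~\ref{smoothingDGS} then supplies an $\exp(-\Omega(\kappa))$-DGS$^{2^{n/2}}_{\eta_\epsilon(L^*)}$ solver running in time $2^{n/2+o(n)}$ and space $O(2^{n/2})$; taking $\kappa = \Theta(n)$ drives the statistical error below $1/2$, so this is in particular the $\frac{1}{2}$-DGS$^{m}_{\eta_\epsilon(L^*)}$ oracle that Theorem~\ref{CVPtoDGS} requires.

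I would then invoke Theorem~\ref{CVPtoDGS} to obtain an $\alpha$-BDD solver with $\alpha(L) = \sqrt{\log(1/\epsilon)/\pi - o(1)}\,/\,\bigl(2\eta_\epsilon(L^*)\lambda_1(L)\bigr)$, the reduction costing a single DGS query plus $m\cdot\poly(n)$ time, where $m = O(n\log(1/\epsilon)/\sqrt{\epsilon})$. Two numerical facts finish the time and space accounting. The number of samples the reduction demands is $m = O(n\log(1/\epsilon)/\sqrt{\epsilon}) = O(e^{(\beta^2/2e+o(1))n})$; since $\beta^2/2e < \frac{1}{2}\log 2$, this is $2^{0.46n+o(n)}$, which is below the $2^{n/2}$ samples the DGS solver actually outputs, so the oracle is strong enough. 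Combining the $m\cdot\poly(n)$ reduction time with the $2^{n/2+o(n)}$ sample-preparation time gives total running time $O(e^{(\beta^2/2e+o(1))n} + 2^{n/2+o(n)}) = O(2^{n/2+o(n)})$, while the space $O(2^{n/2})$ is dominated by storing the samples.

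The delicate step, and the one I expect to be the main obstacle, is the lower bound $\alpha \geq 0.391$. I would first confirm that our $\epsilon$ lies in the small-$\epsilon$ regime of Lemma~\ref{smoothinglambda}, i.e. $\epsilon \leq (e/\beta^2+o(1))^{-n/2}$, which amounts to checking $\beta^2/e \geq \frac{1}{2}\log(e/\beta^2)$ and holds numerically for $\beta = 2^{0.401}$. Inequality~(\ref{ineq:smoothingtosvsmall}) then gives $\lambda_1(L)\eta_\epsilon(L^*) < \sqrt{(\log(1/\epsilon) + n\log\beta + o(n))/\pi}$, and substituting $\log(1/\epsilon) = (\beta^2/e)n$ yields $\alpha(L) > \frac{1}{2}\sqrt{\log(1/\epsilon)/(\log(1/\epsilon) + n\log\beta)}$, which in the limit $n\to\infty$ tends to $\frac{1}{2}\sqrt{(\beta^2/e)/(\beta^2/e + \log\beta)} \geq 0.391$. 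The arithmetic here is exactly where the precise choice $\beta = 2^{0.401}$ matters: it is tuned so that the resulting $\alpha$ stays strictly above $1/3$ (keeping the enumeration scalar $p = 3$) while $m$ remains below $2^{n/2}$.

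Finally, since $\alpha < 1/2$ forces any BDD oracle to return the unique closest lattice point, as already used in Lemma~\ref{lemma:y_equiv}, the $\alpha(L)$-solver just built also decodes every smaller radius; hence it is in particular a valid $0.391$-BDD solver within the stated time and space, which is all the statement claims.
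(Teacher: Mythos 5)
Your proposal is correct and follows essentially the same route as the paper's proof: it chains Corollary~\ref{smoothingDGS} with Theorem~\ref{CVPtoDGS} and lower-bounds $\alpha$ via the small-$\epsilon$ upper bound on $\lambda_1(L)\eta_\epsilon(L^*)$ from Lemma~\ref{smoothinglambda}, arriving at the same limiting expression $\frac{1}{2}\sqrt{\log(1/\epsilon)/(\log(1/\epsilon)+n\log\beta+o(n))} > 0.391$. If anything, you are more careful than the paper in verifying that $\epsilon$ lies in the small-$\epsilon$ regime, that the required sample count $m=2^{0.46n+o(n)}$ is covered by the $2^{n/2}$ DGS samples, and that an $\alpha$-BDD solver with larger $\alpha<1/2$ subsumes a $0.391$-BDD solver.
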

\begin{proof}
{
Let $\alpha(L)\equiv \frac{\sqrt{\log(1/\epsilon)/\pi-o(1)}}{2\eta_\epsilon(L^*)\lambda_1(L)}$ and $\epsilon= e^{-(\beta^2/e+o(1))n}$.
By Lemma~$\ref{smoothinglambda}$, for any $\epsilon \leq (e/\beta^2+o(1))^{-\frac{n}{2}}$ we have:
\[
 \frac{1}{\eta_\epsilon(L^*)} > \sqrt{\frac{\pi}{\log(1/\epsilon)+n\log\beta+o(n)}} \cdot \lambda_1(L),
\]
put it in $\alpha(L)$ we have
\[
\alpha(L) > \frac{1}{2}  \sqrt{\frac{\log(1/\epsilon)-o(1)}{\log(1/\epsilon)+n\log\beta+o(n)}}.
\]
Hence by Theorem~\ref{CVPtoDGS}, once we have $O(e^{(\beta^2/2e+o(1)))n})$ samples from $D_{L^*,\eta_\epsilon(L^*)}$,   then we solves $\alpha$-BDD in time $O(e^{(\beta^2/2e+o(1))n})$ for
\[
\alpha\equiv \frac{1}{2}  \sqrt{\frac{\log(1/\epsilon)}{\log(1/\epsilon)+0.401n+o(n)}}> 0.391-o(1),
\]
which proves the corollary}

\end{proof}

Though we need a $1/3$-$\BDD$ oracle to  solve SVP by Algorithm~\ref{EnumAlgorithm} with $3^n$ queries, however when we choose  the corresponding $\epsilon$ to have a $1/3$-$\BDD$ solver, the smoothing parameter $\eta_{\epsilon(L^*)}$ cannot be proven that it is greater than $\sqrt{2}\cdot\eta_{1/2}$.
 Hence  we choose the smoothing parameter $\eta_\epsilon(L^*)$ in Theorem~\ref{CVPtoDGS} with $\epsilon$ that is derived in Corollary \ref{smoothingDGS}, and then we have a $0.391$-BDD oracle, which can be built in time $O(e^{(\beta^2/2e+o(1))n}+2^{n/2})=O(2^{0.5n+o(n)})$. Note that $0.391>1/3$ fits in our Algorithm \ref{EnumAlgorithm} for finding the shortest vector in a lattice $L$.

\subsection{Classical SVP Algorithm $\enump$}\label{sec:cSVP}

In this subsection, we will explicitly show how to solve SVP by our algorithm $\enump$.
In the last subsection, we show that one can build a $0.391$-BDD oracle in time $O(2^{0.5n})$. 
In fact, the discrete Gaussian samples can be reused for another execution of the BDD oracle.
Therefore,  we can reduce the time complexity by preparing these discrete Gaussian samples in advance.
We define BDD with preprocessing as follows:

\begin{definition}\label{BDDP}
For $\alpha = \alpha(n) < 1/2$, the search problem with preprocessing $\alpha$-BDDP is defined as follows: The problem contains two phase, preprocessing phase and query phase. The input to the preprocessing phase is a basis $\bf{B}$ for a lattice $L \subset \mathbb{R}^n$, and the output to the preprocessing phase is an advice string $A$. In the query phase the inputs are a vector $t \in \mathbb{R}^n$ and the advise string $A$ from the preprocessing process. Then $\alpha$-BDDP is the problem of solving $\alpha$-BDD with preprocessing. Only  on the running time in the query phase matters and the preprocessing phase may take arbitrary time.
\end{definition}

Consider Algorithm~\ref{BDD_Oracle}.  
In the preprocessing phase, several samples   are prepared according to a specific discrete Gaussian distribution on the dual space; in the query phase $f_W(t)\equiv \frac{1}{N}\sum\limits_{i=1}^N cos(2\pi\langle w_i,t\rangle)$ is used to  approximate the periodic Gaussian function,
and gradient ascent is conducted on the periodic Gaussian to construct a BDD solver.  
Therefore, the 0.391-BDDP solver takes time   $O(2^{0.4628n})$  when DGS is prepared in advance,
while  a full 0.391-BDD solver takes time $2^{0.5n+o(n)}$.
Hence we  have 
the following corollary.

\begin{corollary}\label{cor:BDDP}
Let $\beta = 2^{0.401}$ and $\epsilon= e^{-(\beta^2/e+o(1))n}$. There exists an algorithm that solves $\alpha$-BDDP in time $O(e^{(\beta^2/2e+o(1))n})$ for $\alpha = 0.391$. The preprocessing algorithm generates  $O(e^{(\beta^2/2e))n})$ samples from $D_{L^*,\eta_\epsilon(L^*)}$.
\end{corollary}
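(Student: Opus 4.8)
The plan is to observe that the $0.391$-BDD solver from Corollary~\ref{391bdd} already splits cleanly into a lattice-only sampling stage and a target-dependent decoding stage, and to simply relabel these as the preprocessing and query phases of Definition~\ref{BDDP}. First I would define the preprocessing phase to be the discrete Gaussian sampler of Corollary~\ref{smoothingDGS} (equivalently Theorem~\ref{hDGS}) run with $\epsilon = e^{-(\beta^2/e+o(1))n}$ and parameter $s = \eta_\epsilon(L^*)$; it outputs $N = O(e^{(\beta^2/2e)n})$ independent samples $W = (w_1,\dots,w_N)$ from $D_{L^*,\eta_\epsilon(L^*)}$, which become the advice string $A$. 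Since only the query-phase running time is charged in Definition~\ref{BDDP}, the $2^{n/2+o(n)}$ cost of generating these samples is irrelevant, and the fact that Theorem~\ref{hDGS} produces $2^{n/2}$ samples is more than enough, because $N = O(e^{(\beta^2/2e)n}) = O(2^{0.4628n}) < 2^{n/2}$; one keeps the first $N$ of them.

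Next I would define the query phase to be Algorithm~\ref{BDD_Oracle} with its \textbf{Preprocessing} line removed, i.e. on input a target $t$ and the advice $W$ it evaluates $f_W(t)=\frac{1}{N}\sum_{i=1}^N\cos(2\pi\langle w_i,t\rangle)$, performs the two gradient-ascent steps of Line~\ref{alg:gradientascent}, and rounds via $\mathrm{B}\cdot\mathrm{Round}(\mathrm{B}^{-1}t)$ to recover the nearest lattice point. Correctness is inherited verbatim: the algorithm executed here is identical to the one analyzed in Corollary~\ref{391bdd} (and underlying it, Theorem~\ref{gradientascent} and Theorem~\ref{CVPtoDGS}), since only the moment at which the samples are drawn has changed, not the computation performed on them. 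Hence the same $\alpha = 0.391$ decoding guarantee holds with probability $1-2^{-\Omega(n)}$.

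It remains to bound the query-phase time. Each evaluation of $f_W(t)$ and of $\nabla f_W(t)$ is a sum of $N$ terms, and each term---an inner product $\langle w_i,t\rangle$ followed by a cosine (or its derivative)---is computable to the required precision in $\poly(n)$ time. Gradient ascent runs a constant number (two) of rounds, and the final rounding is $\poly(n)$, so the total query cost is $O(N\cdot\poly(n)) = O(e^{(\beta^2/2e+o(1))n})$. There is no genuine obstacle here; the only point requiring a moment's care is confirming that nothing in the query phase secretly reintroduces a $2^{n/2}$ factor---it does not, because the dominant work is the $O(N)$ summation and the expensive DGS sampler has been pushed entirely into preprocessing. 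This yields the claimed query time $O(e^{(\beta^2/2e+o(1))n})=O(2^{0.4628n})$ and completes the proof.
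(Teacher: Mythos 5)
Your proposal is correct and follows essentially the same route as the paper: the paper likewise obtains the BDDP solver by taking the $0.391$-BDD solver of Corollary~\ref{391bdd}, moving the discrete Gaussian sampling of Corollary~\ref{smoothingDGS} into the preprocessing phase as the advice string, and observing that the remaining query-phase work of Algorithm~\ref{BDD_Oracle} (evaluating $f_W$ and its gradient over the $N=O(e^{(\beta^2/2e)n})$ samples, plus rounding) costs $O(e^{(\beta^2/2e+o(1))n})$ via the $m\cdot\poly(n)$ bound of Theorem~\ref{CVPtoDGS}. Your write-up is in fact somewhat more explicit than the paper's brief justification, but there is no substantive difference in approach.
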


\begin{algorithm}[h]

\SetAlgoLined

\SetKwData{Left}{left}\SetKwData{This}{this}\SetKwData{Up}{up}

\SetKwFunction{Union}{Union}\SetKwFunction{FindCompress}{FindCompress}

\SetKwInOut{Input}{input}\SetKwInOut{Output}{output}

\Input{lattice L(\textbf{B})}

\Output{shortest vector $sv$}

Initialize: $\epsilon= 2^{-(\beta^2/e+o(1))n}$,\   sv=$\inf$\;

Preprocessing: $O(2^{n/2})$ discrete Gaussian samplings from $D_{L^*,\eta_\epsilon}$; // Corollary \ref{smoothingDGS}.

\For{all $s \in \mathbb{Z}^n_3$}{

\If{$sv$ $>$ $\bddp_{0.391}(3L,\mathrm{B}s)+\mathrm{B}s \hspace{2mm}\mbox{and}\hspace{2mm} \bddp_{0.391}(3L,\mathrm{B}s)+\mathrm{B}s \neq 0$
}{

$sv = \bddp_{0.391}(3L,\mathrm{B}s)+\mathrm{B}s$; //$\mathrm{B}$ is the basis matrix of $L$

//A BDDP algorithm needs $O(e^{(\beta^2/2e+o(1))n})$ arithmetic operations by Corollary \ref{cor:BDDP}.

}

}
\Return $sv$;
\caption{The algorithm $\enump$ that solves SVP}

\label{C_SVPalgorithm}
\end{algorithm}

Combining Theorem~\ref{Enum} and Corollary~\ref{cor:BDDP}, we then have a classical algorithm $\enump$ in Algorithm~\ref{C_SVPalgorithm} for
SVP.

\begin{theorem}\label{C_SVPsolver}
There exists a classical probabilistic algorithm that solves SVP with probability $1-2^{-\Omega(n)}$ in time $O(e^{(\beta^2/2e+o(1))n}\cdot 3^n)=2^{2.0478n+o(n)}$ with space $2^{n/2+o(n)}$, where $\beta=2^{0.401}$.
\end{theorem}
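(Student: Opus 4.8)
The plan is to instantiate Algorithm~$\enump$ (Algorithm~\ref{C_SVPalgorithm}) with scaling factor $p=3$, target vector equal to the origin, and the $0.391$-BDDP oracle $\bddp_{0.391}$ supplied by Corollary~\ref{cor:BDDP}, and then to establish correctness, running time, space, and success probability as four separate steps, invoking Theorem~\ref{Enum} as the correctness engine.

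For correctness I would first note that $\alpha=0.391$ and $p=3$ satisfy $p\alpha=1.173>1$, so Theorem~\ref{Enum} applies: with $t=0$ the enumeration collects every lattice point $y\in L$ with $\|y\|<p\alpha\lambda_1(L)=1.173\,\lambda_1(L)$ (together with possibly some farther points). Since a shortest nonzero vector has length exactly $\lambda_1(L)<1.173\,\lambda_1(L)$, it is guaranteed to appear among the enumerated points, and because the loop retains only the shortest nonzero vector seen so far, the returned $sv$ is a shortest nonzero vector. The one point worth checking carefully is that each BDD query used to recover a genuine close point $y$ lands inside the decoding radius: the query argument $(y/p \bmod L)-t/p$ is congruent to $(y-t)/p$ modulo $L$, hence $\dist(L,(y/p\bmod L)-t/p)\le\|y-t\|/p<\alpha\lambda_1(L)$, so the oracle decodes correctly exactly as in the proof of Lemma~\ref{lemma:y_equiv}; queries that fall outside the radius merely contribute harmless extra lattice points, consistent with the ``some other lattice points'' clause of Theorem~\ref{Enum}.

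For the complexity, the preprocessing phase invokes Corollary~\ref{smoothingDGS} once to generate $O(2^{n/2})$ samples from $D_{L^*,\eta_\epsilon(L^*)}$ in time $2^{n/2+o(n)}$ and space $O(2^{n/2})$. The main loop runs over $s\in\mathbb{Z}_3^n$, i.e.\ exactly $3^n$ iterations, each making a single call to $\bddp_{0.391}$, which by Corollary~\ref{cor:BDDP} costs $O(e^{(\beta^2/2e+o(1))n})$ once the samples are fixed. Multiplying gives total time $3^n\cdot O(e^{(\beta^2/2e+o(1))n})$; substituting $\beta=2^{0.401}$ and converting to base two via $\log_2\!\big(3\cdot e^{\beta^2/2e}\big)=\log_2 3+(\beta^2/2e)\log_2 e\approx 2.0478$ yields $2^{2.0478n+o(n)}$. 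The crucial point for the space bound is that the algorithm never materializes the full list of $3^n$ enumerated points: it processes each $s$ on the fly and keeps only the running minimum $sv$, so the space is dominated by the $O(2^{n/2})$ stored Gaussian samples, giving $2^{n/2+o(n)}$.

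The hardest part of the argument is the success-probability bound, and here I would lean on how Theorem~\ref{gradientascent} is quantified. The gradient-ascent guarantee (and the two-step BDD construction of Dadush \emph{et al.}) asserts that with probability $1-2^{-\Omega(n)}$ over the single sample set $W$, the decoding bound holds simultaneously for \emph{all} admissible targets $t$. Consequently one good draw of $W$ makes every one of the $3^n$ BDDP queries correct at once, so no union bound over queries is incurred and there is no exponential blow-up of the failure probability. Combining this with the $\exp(-\Omega(\kappa))=2^{-\Omega(n)}$ statistical error of the DGS sampler (Corollary~\ref{smoothingDGS} with $\kappa=\Omega(n)$) via a union bound over these two $2^{-\Omega(n)}$ events yields overall success probability $1-2^{-\Omega(n)}$, completing the proof.
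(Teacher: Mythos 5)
Your proposal is correct and follows essentially the same route as the paper: instantiate $\enump$ with $p=3$, $t=0$, and the $0.391$-BDDP oracle of Corollary~\ref{cor:BDDP}, invoke Theorem~\ref{Enum} for correctness with $p\alpha=1.173>1$, and multiply $3^n$ queries by the per-query cost plus the one-time DGS preprocessing. In fact you are more thorough than the paper's own proof, which does not spell out the space argument (keeping only the running minimum rather than the full list) or the success-probability accounting via the for-all-targets guarantee of Theorem~\ref{gradientascent}; both of your added steps are sound and welcome.
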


\begin{proof}
{
Consider $\enump$. For a lattice $L \subset \mathbb{R}^n$, Corollary~\ref{cor:BDDP} provides a $0.391$-BDDP algorithm in time $O(e^{(\beta^2/2e+o(1))n})$. By Theorem $\ref{Enum}$,   we can enumerate all the lattice points with length less than $1.173 \lambda_1(L)$ by using the $0.391$-BDDP algorithm $3^n$ times. Also by Corollary $\ref{smoothingDGS}$ we   need $O(2^{n/2})$ time to prepare $\exp(-\text{poly}(n))$-DGS$^{O(2^\frac{n}{2})}_{\eta_\epsilon(L^*)}$. Thus $\enump$ reports a shortest vector in time $O(2^{n/2}+3^n\cdot e^{(\beta^2/2e+o(1))n}=2^{2.0478n+o(n)}$ with space $2^{n/2+o(n)}$}

\end{proof}

\section{Quantum speedup for enumeration algorithm}

In the previous section, we provided a classical algorithm $\enump$ that solves SVP in time $2^{2.0478n+o(n)}$.
In $\enump$, the lattice points of length less than $1.173 \lambda_1(L)$  are collected and compared so that the shortest vector is recorded.
We will adapt the classical algorithm $\enump$ to a quantum version, using a modified Grover search (Theorem~\ref{thm:Q_search}) that can find a
 nonzero vector with the shortest length with high probability.
We will first explain the main idea of our quantum algorithm $\qsvp$ for solving SVP in Section~\ref{sec:4.1}. 
Then  
we will  introduce the quantum enumeration algorithm by constructing a quantum circuit $\cO_d$
that identifies a lattice point with length less than $d\lambda_1$.
$\cO_d$ can be used to solve $d$-SVP with constant successful probability by using $2^{1.2553n+o(n)}$ Toffoli gates and classical space $2^{n/2+o(n)}$.

As a consequence, we can use the idea of minimum finding algorithm (Theorem~\ref{thm:Q_min}) to solve exact SVP by recursively using $\cO_d$.
A key component of $\cO_d$ is a \emph{filter circuit}  $\Filter_d$, which will be explicitly constructed in Section~\ref{sec:gap_SVP}. The filter circuit identifies the vectors with length in a specific range. At the last step we will show how to recursively update $\Filter_d$ with a smaller $d$ to implement the desired oracle operation $\cO$ and consequently we can solve  exact SVP. 

\subsection{Main idea of $\qsvp$} \label{sec:4.1}

Suppose $L\subset \mathbb{R}^n$ is a lattice space with basis matrix $B$.
Define a function $f_{BDDP(L)}:\mathbb{Z}^n_3\rightarrow \mathbb{R}^n$ as:
\begin{align}
f_{BDDP(L)}(s) \equiv -\bddp_{0.391}(3L,Bs)+Bs  \label{eq:fbddl}
\end{align}
for $s\in \mathbb{Z}^n_3$, where $\bddp_{0.391}$ is given in Corollary~\ref{cor:BDDP}.
Observe that in $\enump$,  $3^n$ queries are made to   $\bddp_{0.391}$ to compute $f_{BDDP(L)}(s)$ for all $s\in \mathbb{Z}^n_3$. Suppose we have a quantum circuit $U_{BDDP(L)}$ that computes $f_{BDDP(L)}$ defined by 
\begin{align}U_{BDDP(L)}\ket{i}\ket{x}=\ket{i}\ket{x\oplus \| f_{BDDP(L)}(i)\|},
\end{align}
for $i\in \mathbb{Z}^n_3$ and $x \in \mathbb{R}^n$,
where the second register has $b$ qubits to represent $f_{BDDP(L)}(s)$ over $s \in \mathbb{Z}^n_3$.
Then we can prepare the superposition state $$\ket{\psi}=\frac{1}{3^{n/2}}\sum_{i \in \mathbb{Z}^n_3}\ket{i}\ket{0}$$ and run the circuit $U_{BDDP(L)}$ once to obtain
\begin{align}
U_{BDDP(L)}\ket{\psi}=&\frac{1}{3^{n/2}}\sum_{i \in \mathbb{Z}^n_3}\ket{i}\ket{\|f_{BDDP(L)}(i)\|}. \label{eq:UBDDP}
\end{align}
However it is difficult to directly find one $\ket{i}$ such that its length  (represented by $\ket{\|f_{BDDP(L)}(i)\|}$)  is equal to $ \lambda_1(L)$.
This is how the quantum Grover search enters our discussion.
If we can efficiently apply the quantum Grover search to identity a target vector, then we have a desired quantum SVP solver.
Recall that the Grover search has two rotation operators: one is the oracle operation $\cO$ that reflects a state about the indices;
the other is a rotation
\begin{align}\cG= H^{\otimes m} \left(2 \ket{0}^{\otimes m}\bra{0}^{\otimes m} - \mathbb{I}_m\right)  H^{\otimes m}
\end{align}  about the superposition state of all solutions to the search problem,
where $m$ is the number of qubits of the index space and $\mathbb{I}_m$ is the identity operator on the $m$-qubit space.
Then it remains to construct an quantum algorithm for the oracle operation $\cO$.

\subsection{Quantum enumeration algorithm}\label{sec:QEnum}

To construct $\cO$, the first step is to identify whether the length of a vector is $\lambda_1(L)$ or not.
However,  this is difficult since we do not know the value of $\lambda_1(L)$ and to determine this value is NP hard \cite{Haviv:2007:THS:1250790.1250859}.
Alternatively, we construct a quantum circuit $\cO_d:\mathbb{C}^{2^{m+a'}}\rightarrow\mathbb{C}^{2^{m+a'}}$:
\begin{align}
\cO_d \ket{i}\ket{0}^{\otimes a'}=
\begin{cases}
-\ket{i}\ket{0}^{\otimes a'}, & \mbox{if $\lambda_1(L)\leq \| f_{BDDP(L)}(i)\|<d \cdot \lambda_1(L)$;} \\
\ket{i}\ket{0}^{\otimes a'}, & \mbox{otherwise,}
\end{cases}
\end{align}
for $i \in \mathbb{Z}^n_3$,
where $d>1$, $m\equiv n\cdot\log_23 $, and $a'\in \mathbb{N}$ is the number of ancilla qubits.
Suppose we have access to the circuit $\cO_d$. Then  by Theorem \ref{thm:Q_search},    we can solve $d$-SVP with $O(3^{n/2})$ uses of $\cO_d$. 
The resulting algorithm $\qenump$ is illustrated in Fig.~\ref{fig:UnknownGroverSearch}. (Details of $\qenump$ are postponed to Algorithm~\ref{QdSVPsolver} after we give the construction of $\cO_d$.)
 The output of  $\qenump$ is an index $i$ with $\lambda_1\leq \|f_{BDDP_L}(i)\|<d\lambda_1$. 
We would like to have $\qenump$  with an output index that  corresponds exactly to a vector with length $\lambda_1$.
Our method is to recursively use the quantum circuit $\cO_d$ with $d$ updated adaptively to achieve our goal.
It basically follows Theorem~\ref{thm:Q_min} to find a nonzero minimum over $3^n$ indices.
We will   get an index $i$ with $\|f_{BDDP_L}(i)\|=\lambda_1$ with very high probability and use it to build the Grover search oracle $\cO$.


\begin{figure*}[h]
\[
\Qcircuit @C=1em @R=.7em {
  \lstick{\sum_i\ket{i}} & \qw  & \multigate{1}{\cO_d} & \gate{\cG}    & \qw  & \cdots & & \meter & \cw \\
  \lstick{\ket{0}^{\otimes a}}   & \qw  & \ghost{\cO_d}  & \qw   & \qw & \cdots & & \rstick{\ket{0}^{\otimes a}} \qw\\
             &      &        &      & \dstick{\text{Repeat $3^{n/2},3^{n/2}/\sqrt{2},\cdots$ times}}
  \gategroup{1}{3}{2}{7}{.7em}{--}
 }\]
\vspace{0.5cm}
\fcaption{Quantum algorithm that solves $d$-SVP, \qenump}
\label{fig:UnknownGroverSearch}
\end{figure*}

Now we get into the details.
 Our quantum circuit $\cO_d$ has two components. The first one is the quantum circuit $U_{BDDP(L)}$ defined in (\ref{eq:UBDDP}).
 The quantum complexity of $U_{BDDP(L)}$ circuit is same as the amount of arithmetic operations we need for computing classical $\BDDP$ oracle. By Corollary~\ref{cor:BDDP}, we can use $2^{0.4629n+o(n)}$ arithmetic operations to compute $\BDDP_{0.391}$, where $\beta = 2^{0.401}$. Hence we can use $2^{0.4629n+o(n)}$ Toffoli gates to execute the quantum circuit $\BDDP$ with preprocessed $2^{0.4629n+o(n)}$ vectors that are sampled from $D_{L^*,\eta_\epsilon(L^*)}$, where $\epsilon= 2^{-(\beta^2/e+o(1))n}$.

 The second component is a filter circuit
  defined as:
\begin{align}
\Filter_d \ket{v}\ket{0}^{\otimes a}=
\begin{cases}
-\ket{v}\ket{0}^{\otimes a}, & \mbox{if $\lambda_1(L)\leq  v < d \cdot \lambda_1(L)$;} \\
\ket{v}\ket{0}^{\otimes a}, & \mbox{otherwise,}
\end{cases}
\end{align}
for a lattice $L \subset \mathbb{R}^n$,
where $d  >1$, $v \in \mathbb{R}^n$,  $\ket{v}$ is in a representation of appropriate dimension, and $a$ is the number of ancilla qubits.
The circuit $\Filter_d$   can filter out the candidates for solving $d$-SVP.
(Thus we would like to have  $\Filter_d$  with $d$ close to $1$ as possible.) 
 Details of $\Filter_d$ are postponed to Section~\ref{sec:gap_SVP}.

Then $\cO_d$ is defined by
$$\cO_d\equiv (U_{BDDP(L)}\otimes \mathbb{I}_{a}) \cdot (\mathbb{I}_m \otimes \Filter_d) \cdot (U_{BDDP(L)} \otimes \mathbb{I}_{a})$$ for a lattice $L \subset  \mathbb{R}^n$.
Figure~\ref{fig:dSVP} illustrates the components of $\cO_d$. Let $len_i= \| f_{BDDP(L)}(i)\|$ here.
It can be checked that 
for $i \in \mathbb{Z}^n_3$, we have:
\begin{align*}
\cO_d\ket{i}\ket{0}^{\otimes b}\ket{0}^{\otimes a}
&= (U_{BDDP(L)}\otimes \mathbb{I}_a) \cdot (\mathbb{I}_m \otimes \Filter_d) \cdot (U_{BDDP(L)} \otimes \mathbb{I}_a)\ket{i}\ket{0}^{\otimes b}\ket{0}^{\otimes a} \\
&=  (U_{BDDP(L)}\otimes \mathbb{I}_a) \cdot (\mathbb{I}_m \otimes \Filter_d) \ket{i}\ket{ len_i}\ket{0}^{\otimes a}\\
&= \begin{cases}
 (U_{BDDP(L)}\otimes \mathbb{I}_a) \cdot(-1)\cdot\ket{i}\ket{len_i}\ket{0}^{\otimes a} , & \mbox{if $\lambda_1(L)\leq len_i<d  \lambda_1(L)$;} \\
 (U_{BDDP(L)}\otimes \mathbb{I}_a) \ket{i}\ket{len_i}\ket{0}^{\otimes a}, & \mbox{otherwise,}
\end{cases} \\
&=\begin{cases}
 -\ket{i}\ket{len_i\oplus len_i}\ket{0}^{\otimes a}, & \mbox{if $\lambda_1(L)\leq len_i<d\lambda_1(L)$;}\\
  \ket{i}\ket{len_i\oplus len_i}\ket{0}^{\otimes a}, & \mbox{otherwise,}
\end{cases}\\
&=  \begin{cases}
 -\ket{i}\ket{0}^{\otimes b}\ket{0}^{\otimes a}, & \mbox{if $\lambda_1(L)\leq len_i<d  \lambda_1(L)$;}\\
  \ket{i}\ket{0}^{\otimes b}\ket{0}^{\otimes a}, & \mbox{otherwise,}
\end{cases}
\end{align*}
as desired.

\begin{figure*}[h]

\centering
\[
\Qcircuit @C=1em @R=.7em {
	\lstick{\ket{i}}       & \qw & \multigate{1}{U_{BDDP(L)}}  &  \qw  &  \qw                    &  \qw  & \multigate{1}{U_{BDDP(L)}}  &   \rstick{\pm\ket{i}}      \qw \\
	\lstick{\ket{0}^{\otimes b}} & \qw & \ghost{U_{BDDP(L)}}         &  \qw  &  \multigate{1}{\Filter_d}  &  \qw  & \ghost{U_{BDDP(L)}}         &   \rstick{\ket{0}^{\otimes b}} \qw \\
	\lstick{\ket{0}^{\otimes a}}       & \qw & \qw                 &  \qw  &  \ghost{\Filter_d}         &  \qw  & \qw                 &   \rstick{\ket{0}^{\otimes a}}       \qw
}\]
\vspace{0.5cm}
\fcaption{The quantum circuit  $\cO_d$.}
\label{fig:dSVP}
\end{figure*}

\begin{algorithm}

\SetAlgoLined

\SetKwData{Left}{left}\SetKwData{This}{this}\SetKwData{Up}{up}

\SetKwFunction{Union}{Union}\SetKwFunction{FindCompress}{FindCompress}

\SetKwInOut{Input}{input}\SetKwInOut{Output}{output}

\Input{lattice $L(\bB)$, $d'$, confidence parameter $\kappa$}

\Output{short vector length $sv$}

Initialize: quantum gate $U_{BDDP(L)}$;

Initialize: quantum gate $\Filter_{d}$ with filter number $d=d'/\lambda_1(L)$;

Initialize: quantum gate $\cO_{d}=(U_{BDDP(L)}\otimes \mathbb{I}_a) \cdot (\mathbb{I}_m \otimes \Filter_{d}) \cdot (U_{BDDP(L)} \otimes \mathbb{I}_a)$;

Initialize: $candidate.$index =$\emptyset$, $candidate$.length $=\inf$;

\Repeat{$\kappa$ times}{

    T=$3^{n/2}$;

Initialize: $\ket{\psi}=\frac{1}{3^{n/2}}\sum_{i \in \mathbb{Z}^n_3}\ket{i}\ket{0}^{\otimes b}\ket{0}^{\otimes a}$;

	\While{T $\geq$ 1}{

         outcome = measure $\left((\cG\otimes \mathbb{I}_a)\cO_{d}\right)^T\ket{\psi}$ ; // outcome will be an index

         \If{$candidate$.\text{\upshape length} $>$ $\|f_{BDDP(L)}($\text{\upshape outcome}$)\|$}{

         $candidate$.length =  $\|f_{BDDP(L)}($outcome$)\|$ ;

         $candidate$.index = outcome;

         }

         T =T/2;
	}
}

\eIf{candidate.$\text{\upshape length} >  d'$}{

    \Return $\inf$;

}{

    \Return $candidate$.legnth;

}

\caption{Quantum algorithm for solving $d$-SVP, \qenump($L(B),d',\kappa$)}

\label{QdSVPsolver}

\end{algorithm}

The complete algorithm of $\qenump$ is given in Algorithm~\ref{QdSVPsolver} and we explain it as follows.
Given a lattice $L\subset \mathbb{R}^n$, we first randomly choose an index  $i\in \mathbb{Z}^n_3$, and  let $d'=f_{BDDP_L}(i)$. Then we run $\cO_d$ with $d=d'/\lambda_1$  as in Fig.~\ref{fig:UnknownGroverSearch}.
Suppose the set of indices  that are marked by $\cO_d$ is $S\subset \mathbb{Z}^n_3$ such that  $\forall i \in S,$ $\lambda_1\leq f_{BDDP_L}(i)\leq d\lambda_1$.
 An index  from $S$ is  uniformly chosen as a candidate.
  Then $\qenump$ is repeated for a total of $\kappa$ times. For each time,  the candidate is updated if we have a new candidate corresponding to a vector of shorter length.
At the end,
  we have
\begin{align}
Pr[\small\mbox{output } i\in S \mbox{  AND  } f_{BDDP(L)}(i) \mbox{ is shorter than half of } f_{BDDP(L)}(S)  \normalsize] \geq 1-\frac{1}{2^\kappa}, \label{eq:min_find}
\end{align}
where $f_{BDDP(L)}(S)$ means the collection of $f_{BDDP(L)}(x)$ for all $x \in S$.

Now we are ready to solve exact SVP.
 We will use  $\qenump$ as a subroutine in our quantum SVP solver $\qsvp$ as shown in Algorithm~\ref{QSVPsolver}.
 $\qenump$  will be executed with an updated smaller $d'$ to find an index corresponding to a shorter vector.
 The process continues until that no smaller $d'$ is found. Then a shortest vector will be found with high probability.
Consequently,  our main theorem is as follows.

\begin{algorithm}[h]

\SetAlgoLined

\SetKwData{Left}{left}\SetKwData{This}{this}\SetKwData{Up}{up}

\SetKwFunction{Union}{Union}\SetKwFunction{FindCompress}{FindCompress}

\SetKwInOut{Input}{input}\SetKwInOut{Output}{output}

\Input{lattice $L(\bB)$, confidence parameter $\kappa$}

\Output{shortest vector length $sv$}

Initialize: $d'$ = $f_{BDDP(L)}(i)$ for some $i\in \mathbb{Z}^n_3$;

Initialize: $timer = 0$;

\While{true}{

    \eIf{ $\text{\upshape \qenump}$($L(\bB), d',\kappa$) != $\inf$  $\text{\upshape and}$ $timer \leq \kappa\cdot n\log_23$}{

        $d'$ = $\left\|f_{BDDP(L)}\left(\qenump(L(\bB),d',\kappa)\right)\right\|$;

    }{

        \Return $d'$;

    }

}

\caption{Quantum algorithm for solving SVP, \qsvp($L$,$\kappa$)}

\label{QSVPsolver}

\end{algorithm}

\begin{theorem}
There is a quantum algorithm that solves SVP with probability $1-e^{\Omega(n)}$ using $O(ne^{(\beta^2/2e+o(1))n}\cdot 3^{n/2})=2^{1.2553n+o(n)}$ elementary quantum gates,   classic space $2^{n/2+o(n)}$ and poly$(n)$ qubits, where $\beta=2^{0.401}$.
\end{theorem}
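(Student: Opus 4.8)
The plan is to establish correctness first and then bound the three cost measures (quantum gates, qubits, classical space) separately. For correctness, I would begin from the enumeration guarantee of Theorem~\ref{Enum}: the set $\{f_{BDDP(L)}(s):s\in\mathbb{Z}_3^n\}$ contains every lattice point within distance $1.173\,\lambda_1(L)$ of the origin, and in particular it contains a genuine shortest vector, so some index $i^\ast\in\mathbb{Z}_3^n$ satisfies $\|f_{BDDP(L)}(i^\ast)\|=\lambda_1(L)$. It therefore suffices to show that $\qsvp$ locates the minimum over the $3^n$ values $\|f_{BDDP(L)}(i)\|$ that are at least $\lambda_1(L)$. The oracle $\cO_d$ is already verified in the text to flip exactly the phases of indices with length in $[\lambda_1(L),d\,\lambda_1(L))$, granting correctness of $U_{BDDP(L)}$ (Corollary~\ref{cor:BDDP}) and of $\Filter_d$ (Section~\ref{sec:gap_SVP}), which I would take as given.

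Next I would analyze a single call to $\qenump$ as a Grover search with an unknown number of targets (Theorem~\ref{thm:Q_search}): for a fixed threshold $d'$, the targets are precisely the nonzero vectors shorter than $d'$, and the geometric schedule $T=3^{n/2},3^{n/2}/2,\dots$ together with the $\kappa$ outer repetitions yields, with probability at least $1-2^{-\kappa}$, an index whose length lies below the median length of the current target set; this is exactly the halving property recorded in~(\ref{eq:min_find}). I would then chain these calls as in the D\"urr--H\"oyer minimum-finding argument (Theorem~\ref{thm:Q_min}): each successful $\qenump$ call at least halves the number of surviving candidates, so after $O(n)$ threshold updates the value $d'$ reaches $\lambda_1(L)$, at which point no shorter vector exists and $\qsvp$ halts and returns $\lambda_1(L)$. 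A union bound over the $O(n)$ updates, together with the choice $\kappa=\Theta(n)$ enforced by the $timer\le\kappa\,n\log_2 3$ cutoff, drives the total failure probability down to $e^{-\Omega(n)}$.

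For the gate count I would observe that each Grover iteration applies one $\cO_d$, which consists of two copies of $U_{BDDP(L)}$ --- each costing $e^{(\beta^2/2e+o(1))n}=2^{0.4629n+o(n)}$ Toffoli gates by Corollary~\ref{cor:BDDP} --- plus one $\Filter_d$ of only $\poly(n)$ size. The number of Grover iterations per $\qenump$ call is $O(3^{n/2})$ since the geometric schedule sums to $O(3^{n/2})$, there are $O(n)$ calls, and all $\poly(n)$ and $\kappa$ factors are absorbed into the $o(n)$ term of the exponent, giving $e^{(\beta^2/2e+o(1))n}\cdot 3^{n/2}=2^{1.2553n+o(n)}$ gates. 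For the qubit count, the registers are the index register ($m=n\log_2 3$ qubits), the length register ($b=\poly(n)$ qubits), and $\poly(n)$ ancillas, so $\poly(n)$ qubits in total; the crucial point is that the $2^{n/2}$ discrete Gaussian samples of Corollary~\ref{smoothingDGS} enter $U_{BDDP(L)}$ only as \emph{classical} control values, since the same fixed set $W$ is reused at every query, so no QRAM and no exponential qubit register is required. The classical space is consequently dominated by storing those samples, namely $2^{n/2+o(n)}$.

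The hard part will be the probability analysis of the nested adaptive search: one must show that the inner geometric-schedule Grover loop genuinely achieves the median-halving guarantee~(\ref{eq:min_find}) when the target count is unknown and shrinking across updates, and that chaining these calls through the outer loop both converges to $\lambda_1(L)$ within $O(n)$ updates and keeps the accumulated failure probability at $e^{-\Omega(n)}$ without inflating the gate count beyond the $o(n)$ budget. The gate, qubit, and space bookkeeping is, by contrast, routine once the samples-as-classical-control observation is in place.
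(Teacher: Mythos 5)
Your proposal follows essentially the same route as the paper's proof: correctness via the enumeration guarantee and the halving property of Eq.~(\ref{eq:min_find}) chained in D\"urr--H\"oyer fashion over $O(n)$ threshold updates, and the gate count obtained by multiplying $O(3^{n/2})$ Grover iterations by the cost of $\cO_d$, which is dominated by the two $U_{BDDP(L)}$ calls of Corollary~\ref{cor:BDDP}, with the discrete Gaussian samples kept as classical data so that only poly$(n)$ qubits and $2^{n/2+o(n)}$ classical space are needed. If anything, your accounting of the success probability (a union bound over the $O(n)$ updates with $\kappa=\Theta(n)$) is slightly more explicit than the paper's product bound $(1-2^{-\kappa})^{n\log_2 3}$, but the argument is the same.
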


\begin{proof}
{
Suppose we have a lattice $L\subset \mathbb{R}^n$. By Algorithm \ref{C_SVPalgorithm} and Theorem \ref{C_SVPsolver}, we know there exists a shortest vector $f_{BDDP(L)}(s)$ for some $s \in \mathbb{Z}^n_3$. In $\qsvp$, initially we choose an index $s\in \mathbb{Z}^n_3$ at uniform to compute $d=f_{BDDP(L)}(s)$ and use it to run the corresponding algorithm $\cO_{d}$, where $d\lambda_1(L)=d'$.
Once we have $\cO_{d}$ then we can solve $d$-SVP with $O(3^{n/2})$ queries to $\cO_{d}$ in the subroutine $\qenump$. Then we recursively update $d'$ in $\cO_{d}$ gate.
By Eq.~(\ref{eq:min_find}), for each $d'$, the solution set shrinks at least by half with probability $1-2^{-\kappa}$. Therefore  $d'$ is updated for $n\log_23$ times, and the probability that $\qsvp$ finds an index corresponding to $\lambda_1(L)$ is $(1-2^{-\kappa})^{n\log_23}>\frac{1}{2}$ by choosing $\kappa=\Omega(n)$.

The total complexity of $\qsvp$ is therefore $O(3^{n/2})$ times the complexity of $\cO_{d'}$. The complexity of $\cO_d$ is the sum of the complexity of $U_{BDDP(L)}$ and $\Filter_d$.
By  Corollary \ref{cor:BDDP}, Corollary \ref{smoothingDGS}, we can build a $0.39$-BDDP solver in time $O(2^{n/2})$ by using $O(e^{(\beta^2/2e+o(1))n})$ Toffoli gates and  $2^{n/2+o(n)}$ classic space. Also $\Filter_d$ can be built by using only $O(1)$ Toffili gates and $O(1)$ $X$ gates (see Fig. \ref{fig:Filter}). Hence the overall complexity of $\cO_d$   is $O(e^{(\beta^2/2e+o(1))n})$ Toffili gates and $O(1)$ $X$ gates. As a result, we can solve SVP with probability $1-e^{\Omega(n)}$ by $\qsvp$ with $2^{1.2553n+o(n)}$ Toffoli gates, $2^{n/2+o(n)}$  classic space, and poly$(n)$ qubits}
\end{proof}

\subsection{Filter$_d$}\label{sec:gap_SVP}

Now we discuss how to construct the circuit for Filter$_d$ for some $d>1$.
Let  $L\subset \mathbb{R}^n$ be a lattice.  Let $ c= d\lambda_1$.

We will use a binary representation for $\|f_{BDDP_L}(i)\|$ for all $i\in \mathbb{Z}^n_3$.
Suppose  $c$  is represented by $\ket{0}^{\otimes (l-1)}\ket{1}\ket{0}^{\otimes k}$.
Then for $v<c$, $v$ is represented by $\ket{0}^{\otimes l}\ket{b}$ for some  $b\in\mathbb{Z}_2^k$,
and for $v>c$, $v$ is represented by $\ket{a}\ket{b'}$ for some  $b'\in\mathbb{Z}_2^k$ and $a\in\mathbb{Z}_2^l\setminus\{0\}$.
So $l$ and $k$ determine the precision of this representation.

We want $\Filter_d$ to mark a quantum state $\ket{a'}$ with a phase $-1$ if it corresponds to value $a$ such that $\lambda_1\leq a <c$.
The resulting circuit is shown in Fig.~\ref{fig:dSVP}, which can be   efficiently constructed by using $O(l+k)$ Toffoli gates and $O(l)$ $X$ gates.
Note that the $l$-qubit control-NOT gate applies an $X$ to the target qubit when the $l$ control qubits are all $\ket{1}$.
Similarly,  the $l+k$-qubit control-control-NOT gate applies an $X$ to the target qubit when the $l+k$ control qubits are all $\ket{0}$.

$\Filter_d$ takes only $O(l+k)$ Toffoli gates and $X$ gates to construct.
  the dimension of $n$ will asymptotically increase to infinite and we can consider $l$ and $k$ constant when comparing them to the dimension of $n$. Hence we only need $O(1)$ Toffoli gates to construct $\Filter_d$.

\begin{figure*}[h]
\centering
\[
\Qcircuit @C=1em @R=.7em {
    \lstick{\ket{v_{\text{most}}}}  & /^l \qw & \gate{X^{\otimes l}} & \ctrl{2}    & \qw      & \qw       & \qw        & \ctrl{2}      & \gate{X^{\otimes l}} & \ctrlo{1}  &  \rstick{\ket{v'_{\text{most}}}}   \qw  \\
    \lstick{\ket{v_{\text{least}}}}  & /^k \qw & \qw                  & \qw         & \qw      & \qw       & \qw        & \qw            & \qw                  & \ctrlo{2}  &  \rstick{\ket{v'_{\text{least}}}}   \qw  \\
    \lstick{\ket{0}}  & \qw     & \qw                  & \gate{X}      & \gate{X} & \ctrl{1}  & \gate{X}   & \gate{X}         & \qw                  & \qw        &  \rstick{\ket{0}}    \qw  \\
    \lstick{\ket{-}}  & \qw     &  \gate{X}            & \qw         & \qw      & \gate{X}  & \qw        & \qw            & \qw                  & \gate{X}   &  \rstick{\ket{-}}    \qw
}\]
\vspace{0.5cm}
\fcaption{Filter$_d$, note $\ket{v_{\text{most}}}$ ($\ket{v_{\text{least}}}$) denotes the $l$-most ($k$-least) significant qubits.}
\label{fig:Filter}
\end{figure*}


\section{Conclusion and Open Problems}
Since the  Rivest-Shamir-Adleman (RSA) cryptosystem will be insecure against quantum attack in the future, it is desired to find candidates of post-quantum cryptosystems.
Several cryptographic tools are constructed from lattice problems such as SVP or CVP, which are believed to be quantum resistant.
In this paper, we proposed  classical and quantum algorithms for solving SVP with less space complexity.
   We constructed the enumeration algorithm $\enump$, which leads to  a classical SVP solver that runs in time $2^{2.0478n+o(n)}$ with space $2^{n/2+o(n)}$.
The classical SVP solver can be adapted to a quantum one that runs in time $2^{1.2553n+o(n)}$ and requires classical space $2^{n/2+o(n)}$ and  only poly$(n)$ qubits.

One would like to know whether a space-time tradeoff is possible in our scheme.
More explicitly, can we use a little more space so that the time complexity can be reduced? Unfortunately, the answer is no.
As mentioned in Section~\ref{sec:BDDP}, an $\alpha$-BDD oracle from discrete Gaussian sampling can  have $\alpha$ as high as $0.5-o(1)$
and   the search space is at least $3^n$, which implies the query complexity for Grover search is at least $O(3^{0.5n})$.
In addition, there is no space-time tradeoff in Theorem~\ref{hDGS} to generate discrete Gaussian distribution.
If we choose a smaller $\alpha$, instead of $\alpha=1/3$, the BDD oracle, whose time complexity is dominated by the preparation of discrete Gaussian distribution,
still needs  $O(2^{n/2})$ time.


Nevertheless, it is still possible to further reduce the time complexity of our Algorithms~\ref{C_SVPalgorithm} and \ref{QSVPsolver} by providing a more efficient method for discrete Gaussian sampling. Recall that in Corollary~\ref{smoothingDGS},  $O(2^{n/2})$ vectors are sampled in time $O(2^{n/2})$ from $D_{L^*,\eta_\epsilon(L^*)}$ with $\epsilon = e^{-(2^{0.802}/e+o(1))n}$ and then used  to construct a $0.391$-BDDP solver.
  Clearly we only need a $1/3$-BDDP oracle to solve SVP by using $\enump$ (Algorithm~\ref{C_SVPalgorithm}).
  To construct a $1/3$-BDDP oracle, we only need   $2^{0.1604n+o(n)}$ vectors from $D_{L^*,\eta_\epsilon(L^*)}$ with $\epsilon = 2^{-0.3208n}$
  by Theorem~\ref{CVPtoDGS} and Theorem~\ref{hDGS}.
   However,    $\eta_\epsilon(L^*)$ is not greater than $\sigma(L)$  and Theorem~\ref{hDGS}  cannot be applied.
   Once we can sample $2^{0.1604n}$ vectors from $D_{L^*,\eta_\epsilon(L^*)}$ with $\epsilon = 2^{-0.3208n}$ in $O(2^{0.9594n})$, then our Algorithm~\ref{C_SVPalgorithm} $\enump$ can be improved to  solve SVP in time $2^{1.7584n+o(n)}$ and our Algorithm~\ref{QSVPsolver} $\qsvp$ can potentially find the shortest vector in time $2^{0.9594n+o(n)}$.

As for heuristic algorithms for SVP, 
Laarhoven \emph{et al.} conducted a systematic study on using quantum search to speed up several existing heuristic algorithms and they had a heuristic quantum SVP solver with time and space complexity both $2^{0.265n}$. 
We wonder if other quantum algorithms or techniques can be exploited for heuristic algorithms as well.
However, it is not clear whether we can add any heuristic assumption in our algorithms.

\nonumsection{References}

\bibliographystyle{IEEEtran}

\end{document}